\title{Volume Singularities in General Relativity}
\author{Leonardo García-Heveling\thanks{\texttt{leonardo.garcia@uni-hamburg.de} \\ Radboud University Nijmegen, the Netherlands \\ Current address: University of Hamburg, Germany}}
\date{\today}
\newcommand{\vol}{\operatorname{vol}_g}
\newcommand{\area}{\operatorname{area}_g}
\newcommand{\volkb}{\operatorname{vol}_{\kappa,\beta}}
\newcommand{\areakb}{\operatorname{area}_{\kappa,\beta}}
\newcommand{\ccc}{\operatorname{CCC}(\kappa,\beta)}
\newcommand{\Ric}{\operatorname{Ric}}
\newcommand{\nat}{\mathbb{N}}
\newcommand{\real}{\mathbb{R}}
\newcommand{\scri}{\mathfrak{I}^+}
\newtheorem{thm}{Theorem}[section]
\newtheorem{lem}[thm]{Lemma}
\newtheorem{prop}[thm]{Proposition}
\newtheorem{conj}[thm]{Conjecture}
\newtheorem{cor}[thm]{Corollary}
\theoremstyle{definition}
\newtheorem{defn}[thm]{Definition}
\newtheorem{exam}[thm]{Example}
\begin{document}


\maketitle

\begin{abstract}
    We propose a new notion of singularity in General Relativity which complements the usual notions of geodesic incompleteness and curvature singularities. Concretely, we say that a spacetime has a volume singularity if there exist points whose future or past has arbitrarily small spacetime volume: in particular, smaller than a Planck volume. From a cosmological perspective, we show that the (geodesic) singularities predicted by Hawking's theorem are also volume singularities. In the black hole setting, we show that volume singularities are always hidden by an event horizon, prompting a discussion of Penrose's cosmic censorship conjecture.\\

    \noindent \textbf{Keywords:} black holes, Big Bang, singularity theorems, cosmic censorship, quantum gravity, Lorentzian geometry.
\end{abstract}

\section{Introduction}

The concept of singularity is central in General Relativity. Already early on, it was discovered that certain solutions of the Einstein Equations feature unbounded components of the metric tensor. In some cases, such as the event horizon of Schwarzschild spacetime, this turned out to be due to a bad choice of coordinates. In other cases, such as the center of the Schwarzschild solution, a consensus developed that there is a ``real" singularity. Since then, an overwhelming amount of evidence for the existence of black holes, which are described by singular spacetimes such as Schwarzschild or Kerr, has been found. Furthermore, in cosmology there is a consensus that the whole Universe originates from an initial singularity, the Big Bang. Yet the nature and interpretation of singularities remains a debated and intricate topic, even the concept of singularity itself being rather loosely defined \cite{Ger}. In this paper, we offer a new point of view on the definition of singularities in General Relativity.

Our current understanding of singularities is largely based on the works of Penrose \cite{PenSing} (about black holes) and shortly after Hawking \cite{HawSing} (in the cosmological setting). The key idea is to consider a spacetime to be singular if it is incomplete, which could mean \cite[Chap.~6.2]{BEE}
\begin{itemize}
    \item Timelike geodesically incomplete,
    \item Null geodesically incomplete,
    \item Bounded acceleration incomplete.
\end{itemize}
Timelike geodesic incompleteness has the physical interpretation that there exists observers at rest who, after a finite proper time, reach the ``end" of spacetime. Similarly, bounded acceleration incompleteness has the same interpretation for observers that may be accelerated. Null geodesic incompleteness is more difficult to interpret, given that the affine parameter of a null geodesic does not correspond to an observable quantity.

The power of the concept of incompleteness lies in the singularity theorems of Hawking and Penrose, which show that generic spacetimes satisfying some physically reasonable conditions are geodesically incomplete (timelike in Hawking's theorem and null in Penrose's). The singularity theorems have since been improved in many ways: Their assumptions have been relaxed to account for more realistic physical scenarios  \cite{MinSing,GKOS,PaeSing}, and the differentiability of the spacetime metric has been lowered \cite{Graf16,GraSing,GGK}, showing that the predicted singularities are not just an artifact caused by assuming too much regularity. However, in the case of Penrose's theorem, so far no one has managed to strengthen the conclusion: while the theorem is commonly interpreted as to show the existence of a black hole, all it really predicts is the existence of an incomplete null geodesic, but it does not, for instance, predict the existence of a horizon.

In this paper, we propose a new type of incompleteness. Unlike the above ones, it is not based on the length of curves; instead, it makes use of the spacetime volume.
\begin{defn}
    A spacetime $(M,g)$ is \emph{future volume incomplete} if {for  every $\varepsilon >0$ there is a point $x \in M$ with $\vol(I^+(x)) < \varepsilon$.}
\end{defn}
Here $I^+(x)$ denotes the chronological future of the event $x$ \cite[p.~55]{BEE}. Similarly, we can define past volume incompleteness using the past $I^-(x)$. We also use the term \emph{volume singularity} in a more vague sense when talking about volume incomplete spacetimes.

{In a volume incomplete spacetime, there are thus events whose future has a volume smaller than a Planck (spacetime) volume.  Physically,} it is expected that on scales smaller than those of the Planck units, quantum effects become dominant. In particular, there are theories of quantum gravity which predict that spacetime is discrete at those scales \cite{BLMS}, or becomes effectively two-dimensional \cite{CarQG}. Therefore, an observer reaching a spacetime point $p$ with $\vol(I^+(p)) < V_P$ is no longer capable of making valid predictions using classical physics (here $V_P$ denotes the Planck spacetime volume). This signals a breakdown of General Relativity, and thus deserves being called a singularity.

{Theorem \ref{thm1} in Section \ref{sec:gen} tells us that \emph{a sufficient condition for a chronological spacetime to be volume incomplete is the existence of a single point $x$ with $\vol(I^+(x)) < \infty$}. This condition is more managable in practice, and allows us to easily show that} important geodesically singular spacetimes, such as Schwarzschild, are also volume singular. More generally, we prove that volume singularities are always hidden by an event horizon (except in the cosmological case where the whole spacetime originates from the singularity). This suggests that the notion of volume singularity accurately encodes our physical intuition about black holes, and leads us to reformulate Penrose's cosmic censorship conjectures into the statement that ``physically realistic singularities are volume singularities". We also give this a concrete meaning by conjecturing that under suitable assumptions, \emph{$\vol(I^+(S))<\infty$ for a trapped surface $S$}. The latter conjecture, if proven, would provide a ``volume" analogue of Penrose's singularity theorem, but with the much stronger conclusion that there is an event horizon, and hence really of a black hole.

We also apply the concept of volume singularity in the cosmological setting. There, we obtain a volume version of Hawking's singularity theorem, based on previous work by Treude and Grant \cite{TrGr13}. Moreover, we also construct a time function in the style of Andersson, Galloway, and Howard \cite{AGH98} for Big Bang spacetimes.

Finally, note that volume singularities, just as geodesic singularities, do not provide any information about the extendibility of the spacetime. In particular, one can construct artificial, unphysical singularities by removing a portion of a given spacetime. Hence one should only regard a spacetime to be truly singular if it has a singularity (in the volume or geodesic sense) \emph{and} it is inextendible. The most common criterion for inextendibility is the blow-up of some curvature invariant (such as the Kretschmann scalar), which moreover can be given a physical meaning in terms of tidal forces. Curvature blow-up, however, only ensures that the spacetime metric cannot be extended as a twice-continuously differentiable tensor (class $C^2$). In order to account also for weak solutions of the Einsten Equations, which can have regularity lower than $C^2$, other criteria have been developed \cite{Sbi,Sbi2,GLS,GrLi}, but the problem of determining (in)extendibility remains a hard one in general.
\medskip

\textbf{Outline.} The paper starts with a general discussion of volume incompleteness. We consider specific black hole spacetimes in Section \ref{sec:SchKerr}, followed by a more abstract treatment of ``volume" black holes and cosmic censorship in Section \ref{sec:CC}. In Sections \ref{sec:Haw} and \ref{sec:voltime} we turn to the cosmological setting, proving a Hawking style singularity theorem and constructing a cosmological volume function, respectively. We end the paper with a conclusions section summarizing the main results and open problems.

\section{General properties and examples of volume singularities} \label{sec:gen}

{We start by providing a more practical characterization of volume incompleteness for chronological spacetimes (that is, those without closed timelike curves).}

\begin{thm} \label{thm1}
A chronological spacetime $(M,g)$ is future volume incomplete if and only if it contains a point $x_0 \in  M$ such that $\vol(I^+(x_0)) < \infty$.
\end{thm}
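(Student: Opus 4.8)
The plan is to dispatch the easy direction in one line and put all the work into the converse. If $(M,g)$ is future volume incomplete, then applying the definition with $\varepsilon=1$ yields a point $x_0$ with $\vol(I^+(x_0))<1<\infty$. Conversely, suppose some $x_0\in M$ satisfies $V_0:=\vol(I^+(x_0))<\infty$; I must show that for every $\varepsilon>0$ there is a point with future volume less than $\varepsilon$. If $I^+(x_0)=\emptyset$ we are done immediately, so assume $V_0>0$, and set $P:=I^+(x_0)$. The key structural fact is that $P$ is a future set: $y\in P$ implies $I^+(y)\subseteq P$ (push intermediate points along timelike curves through $x_0$), so in particular $\vol(I^+(y))\le V_0$ for all $y\in P$. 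It then suffices to prove $\inf_{y\in P}\vol(I^+(y))=0$.

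Assume for contradiction that $c:=\inf_{y\in P}\vol(I^+(y))>0$. I would then estimate, in two ways, the $\vol^{\otimes n}$-measure of the ``$n$-chain region'' $\Delta_n:=\{(x_1,\dots,x_n)\in P^n : x_1\ll x_2\ll\cdots\ll x_n\}$ (Borel, since the chronology relation is open in $M\times M$). For the lower bound, integrate the coordinates out one at a time using Tonelli: peeling off $x_n$ contributes a factor $\vol(I^+(x_{n-1}))\ge c$, then $x_{n-1}$ contributes $\vol(I^+(x_{n-2}))\ge c$, and so on — each intermediate point lies in $P$ because $P$ is a future set, so the bound $\vol(I^+(\cdot))\ge c$ always applies — leaving $\vol^{\otimes n}(\Delta_n)\ge c^{\,n-1}\vol(P)=c^{\,n-1}V_0$. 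For the upper bound, for $\sigma\in S_n$ put $\Delta_n^\sigma:=\{(x_1,\dots,x_n)\in P^n : x_{\sigma(1)}\ll\cdots\ll x_{\sigma(n)}\}$; permuting coordinates is measure preserving, so each $\Delta_n^\sigma$ has the same measure as $\Delta_n$, and here is where chronology is essential: the $\Delta_n^\sigma$ are pairwise disjoint modulo the $\vol^{\otimes n}$-null set of tuples with a repeated entry, because if a tuple of distinct points were increasing for two different orderings then some pair $x_a,x_b$ would satisfy $x_a\ll x_b$ and $x_b\ll x_a$, a closed timelike curve. Hence $n!\,\vol^{\otimes n}(\Delta_n)\le\vol^{\otimes n}(P^n)=V_0^{\,n}$.

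Combining the two estimates gives $n!\,c^{\,n-1}V_0\le V_0^{\,n}$, i.e. $n!\le(V_0/c)^{n-1}$ for all $n$, which is absurd once $n$ is large since $n!$ outgrows every exponential. Therefore $c=0$, so $P\subseteq M$ contains points of arbitrarily small future volume, and $(M,g)$ is future volume incomplete.

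I expect the main obstacle to be resisting the temptation of the dynamical approach — walking up a future-inextendible timelike curve from $x_0$ and hoping the volumes $\vol(I^+(\gamma(t)))$ decay to $0$. That works in, say, globally hyperbolic spacetimes, but can fail for a merely chronological one, where $\gamma$ may be imprisoned and $\bigcap_t I^+(\gamma(t))$ may have positive volume; this is exactly why one reaches for the coordinate-counting argument instead, which trades the dynamical picture for the observation that a finite-volume future set cannot support arbitrarily long chains' worth of disjoint product measure once every principal future has volume bounded below. The remaining difficulties are routine: checking measurability of $y\mapsto\vol(I^+(y))$ and of the $\Delta_n^\sigma$ (both from openness of $\ll$ together with Tonelli), discarding the measure-zero diagonal of coincident tuples, and handling the degenerate cases $I^+(x_0)=\emptyset$ or $\vol(I^+(x_0))=0$.
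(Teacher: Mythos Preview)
Your proof is correct and genuinely different from the paper's. The paper actually takes the dynamical route you warn against, but makes it work: assuming $f(x_0):=\inf_{z\in I^+(x_0)}\vol(I^+(z))>0$, it builds a chain $x_0\ll x_1\ll\cdots$ with $v(x_{i+1})\le\tfrac12\bigl(v(x_i)+f(x_i)\bigr)$, so $v(x_i)\searrow v_\infty>0$ and $\bigcap_i I^+(x_i)$ has positive volume; rather than needing this intersection to be empty, the paper picks any $x_\infty$ in it and derives a contradiction from the strict inequality $v(x_\infty)>f(x_\infty)$ (Lemma~\ref{lem:subsetneq}) against the limit of the recursion. So imprisonment is not the obstruction once one exploits the nonempty intersection instead of fighting it. Your chain-counting argument trades that sequence-and-limit bookkeeping for a single combinatorial estimate---chronology makes the $n!$ permuted copies of $\Delta_n$ essentially disjoint in $P^n$, forcing $n!\le (V_0/c)^{n-1}$---which sidesteps Lemma~\ref{lem:subsetneq} entirely and has a pleasant causal-set flavour. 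The paper's approach, in exchange, isolates the strict-monotonicity lemma for reuse later in the paper and stays closer to the order-theoretic picture of walking into the future and strictly shrinking the remaining volume.
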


{The proof makes use of the following fact, which will also be useful elsewhere in the paper, so we state it as a lemma.}

\begin{lem} \label{lem:subsetneq}
 {Let $x,y$ be two points such that $\vol\left(I^+(x)\right) < \infty$ and $I^+(y) \subsetneq I^+(x)$. Then} \[ \vol\left(I^+(y)\right) < \vol\left(I^+(x)\right). \]
\end{lem}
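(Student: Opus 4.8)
The plan is to locate a nonempty \emph{open} subset of the difference $I^+(x)\setminus I^+(y)$. Since the volume measure $\vol_g$ is induced by a smooth, strictly positive density, any such set has strictly positive volume, and the claimed strict inequality then follows from additivity of $\vol_g$ together with the hypothesis $\vol(I^+(x))<\infty$. Concretely, because $I^+(y)$ is a \emph{proper} subset of $I^+(x)$ (and in particular $I^+(y)\subseteq I^+(x)$), I can fix a point $z\in I^+(x)\setminus I^+(y)$, and the whole argument reduces to exhibiting an open set of ``witnesses'' lying chronologically above $x$ but not above $y$.

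The set I would use is $U:=I^+(x)\cap I^-(z)$. It is open, as the intersection of two chronological sets, and it is nonempty: from $x\ll z$ there is a future-directed timelike curve from $x$ to $z$, and any of its interior points lies simultaneously in $I^+(x)$ and $I^-(z)$. The key point is that $U$ is disjoint from $I^+(y)$: if some $w$ belonged to $I^-(z)\cap I^+(y)$, then $y\ll w\ll z$, hence $y\ll z$, i.e.\ $z\in I^+(y)$, contradicting the choice of $z$. Therefore $U\subseteq I^+(x)\setminus I^+(y)$.

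Finally, since $U$ is a nonempty open subset of $M$ we get $\vol(I^+(x)\setminus I^+(y))\ge\vol(U)>0$, and because $I^+(y)\subseteq I^+(x)$ with $\vol(I^+(x))<\infty$ we may write $\vol(I^+(x))=\vol(I^+(y))+\vol(I^+(x)\setminus I^+(y))$, which gives $\vol(I^+(y))<\vol(I^+(x))$. The only genuinely geometric ingredient is the disjointness of $U$ and $I^+(y)$, and it uses nothing beyond transitivity of the chronological relation; the rest is routine, the sole nontrivial measure-theoretic input being that $\vol_g$ assigns positive measure to nonempty open sets, which is immediate from its local expression as a smooth positive multiple of Lebesgue measure. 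I therefore do not anticipate any real obstacle: the lemma is essentially the statement that $I^+$ is strictly volume-monotone on the region where the volume is finite.
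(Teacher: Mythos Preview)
Your proof is correct and follows essentially the same approach as the paper: you pick $z\in I^+(x)\setminus I^+(y)$, set $U:=I^+(x)\cap I^-(z)$, use transitivity to show $U\cap I^+(y)=\emptyset$, and conclude from $\vol(U)>0$. The only differences are cosmetic---you spell out more carefully why $U$ is nonempty and why nonempty open sets have positive $\vol_g$-measure, whereas the paper takes these for granted.
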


\begin{proof}
 {Choose a point $z \in I^+(x) \setminus I^+(y)$. Then $I^-(z) \cap I^+(y) = \emptyset$, since otherwise $z \in I^+(y)$ by transitivity. It follows that also $U := I^+(x) \cap I^-(z)$ is disjoint from $I^+(y)$. Since moreover $U$ is non-empty and open, it has non-zero volume. Therefore \[ \vol\left(I^+(y)\right) < \vol\left(I^+(y)\right) + \vol\left(U\right) \leq \vol\left(I^+(x)\right), \] completing the proof.}
\end{proof}

\begin{proof}[Proof of Theorem \ref{thm1}]
    Let
    \begin{align}
        v(x) &:= \vol\left(I^+(x)\right), \nonumber \\
        f(x) &:= \inf \big\{ v(z) \mid z \in I^+(x) \big\}, \label{eq:deff}
    \end{align}
    noting that for all $x \in M$ with $v(x) < \infty$ and $y \in I^+(x)$,
    \begin{align}
        v(y) &< v(x), \label{eq:vdec} \\
        f(y) &\geq f(x), \label{eq:fnondec} \\
        v(x) &> f(x). \label{eq:vgf}
    \end{align}
    Here \eqref{eq:vdec} holds {by Lemma \ref{lem:subsetneq}} since if $y \in I^+(x)$, then by chronology of $M$, $I^+(y) \subsetneq I^+(x)$ {(the inclusion follows by transitivity, and must be strict because $y \not\in I^+(y)$)}. But then the infimum in \eqref{eq:deff} for $y$ is over a smaller set than for $x$, proving also \eqref{eq:fnondec}. Finally, \eqref{eq:vgf} follows because one can always find a suitable $y \in I^+(x)$ with $v(x) > v(y) \geq f(x)$.

    Next, assume that there exists $x_0 \in M$ with $v(x_0) < \infty$. Then $f(x_0)<\infty$ by \eqref{eq:vgf}. If $f(x_0)=0$, we are done, so assume for the sake of contradiction that $f(x_0)>0$. Construct a sequence of points $(x_i)_{i \in \nat}$ such that $x_{i+1} \in I^+(x_i)$ and
    \begin{equation}
        v(x_{i+1}) \leq \frac{v(x_i) + f(x_i)}{2}. \label{eq:defseq}
    \end{equation}
    Such a sequence exists because, given $x_i$, \eqref{eq:deff}, \eqref{eq:vdec} and \eqref{eq:vgf} together allow one to find $x_{i+1}$. Since $f(x_i) \geq f(x_0) >0$, and by \eqref{eq:vdec} also $v(x_{i+1}) < v(x_i)$, we have that $v(x_i) \to v_\infty > 0$ as $i \to \infty$. Therefore
    \begin{equation*}
    \vol \left( \bigcap_{i \in \nat} I^+(x_i) \right) = \lim_{i \to \infty} v(x_i) > 0,
    \end{equation*}
    and hence
    \begin{equation*}
    P := \bigcap_{i \in \nat} I^+(x_i) \neq \emptyset.
    \end{equation*}
    Thus we can choose a point $x_\infty \in P$. Notice that by \eqref{eq:vdec}
    \begin{equation*}
        v(x_\infty) \ {\leq}\  \lim_{i \to \infty} v(x_i),
    \end{equation*}
    and by \eqref{eq:fnondec}
    \begin{equation*}
        f(x_\infty) \geq \lim_{i \to \infty} f(x_i),
    \end{equation*}
    where the limit exists because $f(x_i) < v(x_0) < \infty$ for all $i \in \nat$. On the one hand, {combining these two inequalities with \eqref{eq:vgf} shows that}
    \begin{equation*}
        \lim_{i \to \infty} v(x_i) \ { \geq v(x_\infty) > f(x_\infty) \geq }\ \lim_{i \to \infty} f(x_i).
    \end{equation*}
    On the other hand, taking the limit $i \to \infty$ in \eqref{eq:defseq} shows that
    \begin{equation*}
        \lim_{i \to \infty} v(x_i) \leq \lim_{i \to \infty} f(x_i).
    \end{equation*}
    A contradiction is reached, showing that in fact $f(x_0)$ can only be zero, and the theorem follows.
\end{proof}

We end this section with two examples to illustrate that, without any assumptions, the notions of geodesic and volume singularity are logically independent.

\begin{exam}[Geodesically complete but volume incomplete] \label{ex:volin}
Let $n \geq 3$, let $M = (1,\infty) \times S^n$ and let $h$ be the round metric on the $n$-sphere $S^n$. Equip $M$ with the Lorentzian metric
\begin{equation*}
    g = -dt^2 + t^{-1} h.
\end{equation*}
Then $(M,g)$ is future volume incomplete {by Theorem \ref{thm1}, since}
\begin{equation*}
    \vol \left( I^+(x)\right) < \vol\left(\{t> t_x\}\right) = \operatorname{vol}_h(S^n) \int_{t_x}^\infty t^{-\frac{n}{2}} dt < \infty
\end{equation*}
for all $x \in M$ (here $t_x$ denotes the value of the $t$-coordinate at $x$, and note the integral is finite since $n\ge 3$). Nonetheless, $(M,g)$ is future causally geodesically complete. While it is obvious that the $t$-lines are complete, it is less so that all causal geodesics are complete. Therefore, we proceed to compute the geodesics explicitly.

By symmetries of the sphere, we know that the projection onto $S^n$ of any geodesic $\gamma$ must remain restricted to the equator (when we choose the equator to be tangent to the projection of the initial tangent vector to $\gamma$). Moreover, the sphere admits a Killing vector field $V$ tangent to the equator, which we can choose to be normalized on the equator. The geodesics therefore admit a constant of motion
\begin{equation} \label{eq:ctmotion}
    L = g(\dot\gamma,V) = \frac{1}{t} \dot\theta,
\end{equation}
where $\theta$ denotes the natural angular coordinate along the equator, and the dot denotes differentiation with respect to the affine parameter of $\gamma$. In the cases of interest, $\gamma$ satisfies
\begin{equation} \label{eq:arclengthparam}
    -\delta = -\dot t^2 + \frac{1}{t} \dot\theta^2,
\end{equation}
for $\delta = 0,1$ ($\delta=0$ corresponds to a null geodesic, and $\delta=1$ to a timelike geodesic parametrized by arclength). Isolating $\dot\theta$ from \eqref{eq:ctmotion} and \eqref{eq:arclengthparam}, equating the ensuing expressions, and performing simple algebraic manipulations yields
\begin{equation*}
    \dot t = \sqrt{L^2 t + \delta}.
\end{equation*}
{Here we have chosen the square-root with a $+$ sign because we are only interested in future-directed geodesics.} We solve this ordinary differential equation by separation of variables, obtaining
\begin{equation} \label{eq:tofs}
    t(s) = \frac{L^2}{4} (s+C)^2 - \frac{\delta}{L^2}
\end{equation}
for some integration constant $C$. Substituting $t(s)$ into \eqref{eq:ctmotion} {and integrating}, we furthermore obtain
\begin{equation} \label{eq:thetaofs}
    \theta (s) = \frac{L^3}{12} \left( s+C \right)^3 - \frac{\delta}{L} s + D,
\end{equation}
{with another integration constant $D$ appearing.}

{We can establish the dependence of the constants $L,C,D$ on the starting point $(t_0,\theta_0)$ and the initial velocity $(\dot{t}_0,\dot{\theta}_0)$ by evaluating \eqref{eq:ctmotion}, \eqref{eq:tofs} and \eqref{eq:thetaofs} at $s=0$, obtaining}
\begin{equation*}
 L = \frac{\dot{\theta}_0}{t_0}, \qquad C = \frac{2}{L} \sqrt{t_0 + \frac{\delta}{L^2} }, \qquad D = \theta_0 - \frac{L^3 C^3}{12}. 
\end{equation*}
{Note that we had already fixed the norm of the tangent vector to the geodesic in \eqref{eq:arclengthparam}, which is why there are three and not four constants, and $\dot{t}_0$ does not appear explicitly in their expressions. However, \eqref{eq:tofs} only fixes $C^2$, and it is the fact that $\dot{t}_0 >0$ (as the geodesic is future directed) which forces us to choose $C>0$. Thanks to this, $t(s)$ is increasing and hence $t(s) \in (1,\infty)$ for all $s \in [0,\infty)$. Since neither \eqref{eq:tofs} nor \eqref{eq:thetaofs} blow up for finite affine parameter $s$, it follows that all causal geodesics are future-complete.}
\end{exam}

Conversely, there exist many geodesically incomplete spacetimes which are volume complete, such as maximally extended Kerr spacetime. However, the maximal extension of Kerr is not globally hyperbolic, whence we cooked up a new example to illustrate that global hyperbolicity is not an issue. Our example bears resemblance to the one in \cite[p.~531]{Ger}, but is both timelike and null incomplete.

\begin{figure}[h]
    \centering
    \begin{subfigure}{0.5\textwidth}
    \centering
    \begin{tikzpicture}
    \begin{axis}[xmin=-2, xmax=2, ymin=-2, ymax=2, hide axis,axis equal image,clip=false]
        \draw[fill=gray!20,gray!20] (axis cs: -2,-2) -- (axis cs: 2,2) -- (axis cs: 1,2) -- (axis cs: -2,-1); -- cycle;
        \draw (axis cs: -2,-2) -- (axis cs: 2,2) node [midway, above, sloped] {$u=0$};
        \draw (axis cs: -2,-1) -- (axis cs: 1,2) node [midway, above, sloped] {$u=1$};
        \node at (axis cs: -1.75,-0.25) {$A$};
        \node at (axis cs: -1.5,-1) {$B$};
        \node at (axis cs: -1.25,-1.75) {$C$};
        \addplot[domain=-0.75:5,samples=20,smooth] ({0.5*(x+1+exp(-x))},{0.5*(x-1-exp(-x))});
        \draw[-stealth] (axis cs:{0.5*(3.3+1+exp(-3.3))-0.005},{0.5*(3.3-1-exp(-3.3))}) -- (axis cs:{0.5*(3.4+1+exp(-3.4))-0.005},{0.5*(3.4-1-exp(-3.4))});
        \node at (axis cs:{0.5*(3.3+1+exp(-3.3))},{0.5*(3.3-1-exp(-3.3))-0.5}) {$\gamma$};
    \end{axis}
    \end{tikzpicture}
    \caption{The whole spacetime in $(u,v)$ coordinates.}
    \label{fig:geoina}
    \end{subfigure}%
    \begin{subfigure}{0.5\textwidth}
    \centering
    \begin{tikzpicture}
        \begin{axis}[xmin=-2, xmax=2, ymin=-2, ymax=2, hide axis,axis equal image,clip=false]
        \draw[dashed] (axis cs: -2,-2) -- (axis cs: 0,0)node [midway, above, sloped] {$u=0$};
        \draw[dashed] (axis cs: 0,0) -- (axis cs: 2,-2)node [midway, above, sloped] {$w=0$};
        \draw (axis cs:-0,-1) -- (axis cs:0.25,-0.25);
        \draw[-stealth] (axis cs:-0.33,-1.99) -- (axis cs:0.1,-0.7);
        \node at (axis cs:0.1,-1.2) {$\gamma$};
    \end{axis}
    \end{tikzpicture}
    \caption{The region $C$ in $(u,w)$ coordinates.}
    \label{fig:geoinb}
    \end{subfigure}
    \caption{The spacetime in Example \ref{ex:geoin}, with a future incomplete timelike geodesic $\gamma$.}
    \label{fig:geoin}
\end{figure}
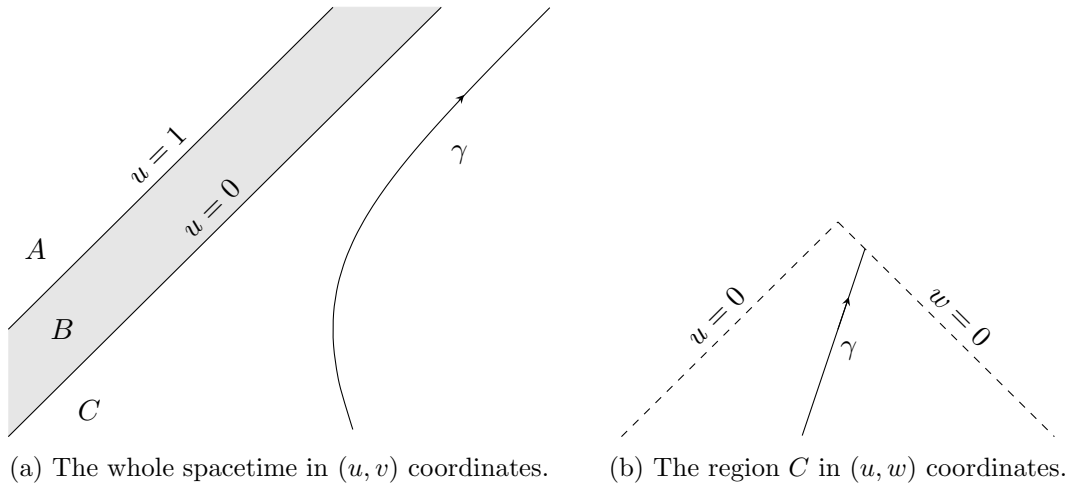

\begin{exam}[Geodesically incomplete but volume complete] \label{ex:geoin}
Consider $M = \real^2$ equipped with the Lorentzian metric
\begin{equation*}
 g = - \Omega(u,v) du dv,
\end{equation*}
which is conformal to the Minkowski metric written in null coordinates $u=t-x$, $v=t+x$. Hence $(M,g)$ is globally hyperbolic. We divide $M$ into three regions:
\begin{align*}
    A = \{(u,v) \mid 1<u\}, \quad
    B = \{(u,v) \mid 0 \leq u \leq 1\}, \quad
    C = \{(u,v) \mid u<0\},
\end{align*}
depicted in Figure \ref{fig:geoina}. We then set the conformal factor to
\begin{equation*}
    \Omega (u,v) = \begin{cases} 1 &\text{if } (u,v) \in A, \\ e^{-v} &\text{if } (u,v) \in C, \\ \text{smoothly extended} &\text{to } (u,v) \in B. \end{cases}
\end{equation*}
On the one hand, $(M,g)$ is future volume complete since for any $x \in M$, already $I^+(x) \cap A$ has infinite volume, so surely also $\vol(I^+(x)) = \infty$. On the other hand, region $C$ contains incomplete null and timelike geodesics. To see this, define a new coordinate $w=-e^{-v}$ on region $C$, with range $w \in (-\infty,0)$, so that the metric takes the form $g=-dudw$. Therefore region $C$ is isometric to a wedge of Minkowski spacetime, and all causal geodesics in $C$ are incomplete (see Figure \ref{fig:geoinb}). Those along which $u \to 0$ may enter region $B$ and are hence not necessarily incomplete in the whole spacetime $M$, but those along which $u \not\to 0$ are inextendible and hence incomplete also in $M$. See, for instance, the curve $\gamma$ in Figure \ref{fig:geoin}.
\end{exam}

\section{The singularities in Schwarzschild, Reissner--Nordstr\"om and Kerr spacetimes} \label{sec:SchKerr}

In this section, we discuss the most important black hole spacetimes in the context of volume singularities.

\begin{prop}
In Schwarzschild spacetime, points $p$ in the interior region (with $r_p < 2m$) have $\vol \left( I^+(p) \right) < \infty$, while points in the exterior region (with $r_p > 2m$) have $\vol \left( I^+(p) \right) = \infty$.
\end{prop}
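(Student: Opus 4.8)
The plan is to treat the two regions separately, using in both cases the elementary observation that in the standard Schwarzschild chart $(t,r,\theta,\phi)$ the volume element is $r^2\sin\theta\,dt\,dr\,d\theta\,d\phi$ on the \emph{whole} spacetime (the factors $(1-2m/r)^{\pm1}$ in $g_{tt}$ and $g_{rr}$ cancel in $\sqrt{|g|}$), so that volume computations reduce to ordinary integrals in these coordinates. The only genuine difference between the two regions is the causal character of $\partial_t$ (timelike for $r>2m$, spacelike for $r<2m$) and of $\partial_r$ (spacelike for $r>2m$, timelike for $r<2m$).

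For a point $p$ with $r_p<2m$, I would first record that $r$ is a time function on the interior: writing $f(r)=2m/r-1>0$, a causal tangent vector $X=\dot t\,\partial_t+\dot r\,\partial_r+\dot\omega$ satisfies $f\dot t^2-f^{-1}\dot r^2+r^2|\dot\omega|_h^2\le 0$, which forces $\dot r\neq 0$, and future-directedness then forces $\dot r<0$ (the singularity $r=0$ lies to the future). Hence no future causal curve from $p$ can climb back to $r=2m$, so $I^+(p)$ remains inside the interior and is contained in $\{0<r<r_p\}$. This containment alone is not enough, since $t$ ranges over all of $\real$; the key estimate is that the \emph{same} inequality gives $|\dot t|\le f(r)^{-1}|\dot r|$, and integrating along a future causal curve from $p$ yields $|t-t_p|\le\int_{r}^{r_p}f(r')^{-1}\,dr'\le\int_0^{r_p}\frac{r'\,dr'}{2m-r'}=:T_p<\infty$, the integral converging precisely because $r_p<2m$. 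Therefore $I^+(p)\subseteq\{\,|t-t_p|\le T_p,\ 0\le r\le r_p\,\}$, a coordinate box of finite volume $\frac{8\pi}{3}T_p r_p^3$, and so $\vol(I^+(p))<\infty$.

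For a point $p$ with $r_p>2m$, I would exploit the static Killing field $\partial_t$, whose flow $\Phi_s\colon(t,r,\omega)\mapsto(t+s,r,\omega)$ acts by isometries and hence preserves both volume and the causal structure. Since the static observer $s\mapsto\Phi_s(p)$, $s\ge0$, is a future-directed timelike curve, $\Phi_s(p)\in I^+(p)$ for $s>0$, whence $\Phi_s(I^+(p))=I^+(\Phi_s(p))\subseteq I^+(p)$ for $s>0$. Now pick any nonempty open set $B\subseteq I^+(p)$ contained in a thin slab $\{t_p<t<t_p+1\}$ of the exterior region — e.g. a small ball around $\Phi_{1/2}(p)$ — so that $B$ has positive volume. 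The translates $\Phi_k(B)$, $k\in\nat$, lie in the pairwise disjoint slabs $\{t_p+k<t<t_p+k+1\}$, all lie in $I^+(p)$, and all have the same positive volume; summing gives $\vol(I^+(p))=\infty$.

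The exterior half is routine; the real content is the interior estimate, and the main point one must get right is that the obvious inclusion $I^+(p)\subseteq\{r<r_p\}$ does \emph{not} suffice and one needs the sharper bound on the $t$-drift. That bound ultimately reflects the fact that future light cones become ``narrow'' in the $t$-direction as one approaches $r=0$, quantified by the convergence of $\int_0^{r_p}r'/(2m-r')\,dr'$. Secondary care points: confirming (via monotonicity of $r$) that $I^+(p)$ for interior $p$ never re-enters $r\ge2m$, so the interior chart covers it; and, for exterior $p$, that one may work entirely within the exterior region, so that none of the coordinate computations are affected by the horizon.
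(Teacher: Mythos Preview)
Your argument is correct. For the interior region it is essentially the paper's proof: both bound the $t$-drift of a future causal curve via $|\dot t|\le f(r)^{-1}|\dot r|$, with the paper replacing $f(r)^{-1}$ by the constant $C=f(r_0)^{-1}$ (using $r\le r_0$) to get $|t-t_0|<Cr_0$, while you integrate $f(r')^{-1}$ directly to obtain the slightly sharper $T_p=\int_0^{r_p} r'/(2m-r')\,dr'$. The resulting coordinate box and volume estimate are the same up to this constant.

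For the exterior region you take a genuinely different route. The paper argues concretely: it follows a null curve around the sphere to show that an entire $2$-sphere $\{t_0+s_1\}\times\{r_0\}\times S^2$ lies in $J^+(p)$, then thickens in $r$ to obtain a half-infinite tube $\{t>t_0+s_2,\ |r-r_0|<\varepsilon\}\subset J^+(p)$ of infinite volume. Your argument instead exploits the static Killing field $\partial_t$: the isometries $\Phi_k$ map a single small open set $B\subset I^+(p)$ to infinitely many disjoint equal-volume translates inside $I^+(p)$. Your version is shorter and more conceptual---it uses only stationarity and works verbatim in any spacetime region where a complete timelike Killing field exists---whereas the paper's construction is more explicit and exhibits a concrete infinite-volume subset.
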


\begin{proof}
We work in the usual coordinates where the expression of the Schwarzschild metric is
\begin{equation*}
    g = - \left( 1 - \frac{2m}{r} \right) dt^2 + \left( 1 - \frac{2m}{r} \right)^{-1} dr^2 + r^2 d \Omega.
\end{equation*}

\textbf{Interior.} A future-directed timelike curve $\gamma$ starting at $p_0 = (t_0,r_0,\theta_0,\phi_0)$ with $r_0 < 2m$ must satisfy
\begin{equation*}
    \left\vert 1 - \frac{2m}{r_0} \right\vert^2 \dot{t}^2 \leq \left\vert 1 - \frac{2m}{r} \right\vert^2 \dot{t}^2 < \dot{r}^2,
\end{equation*}
because $g(\dot\gamma,\dot\gamma) < 0$ and because $r$ is strictly decreasing along $\gamma$. Naming
\begin{equation*}
    C := \left\vert 1 - \frac{2m}{r_0} \right\vert^{-1},
\end{equation*}
we can write $|\dot{t}| < C|\dot{r}|$, which implies $t \in (t_0 - Cr_0, t_0 + Cr_0)$. We conclude that
\begin{align*}
    \vol\left( I^+(p_0) \right) &\leq \int_{t_0 - Cr_0}^{t_0 + Cr_0} \int^{r_0}_{0} 4 \pi r^2 dr dt = \frac{8}{3} \pi C r_0^4 < \infty.
\end{align*}

\textbf{Exterior.} Let $p_0 = (t_0,r_0,0,0)$ with $r_0 > 2m$. The curve
\begin{equation*}
    \gamma(s) = (t_0 + s,r_0,\left\vert 1 - \frac{2m}{r_0} \right\vert \frac{1}{r_0^2} s, 0)
\end{equation*}
is null, and for $s_1 := C r_0^2 \pi$ we have $\gamma(0) = p_0$ and $\gamma(s_1) = (t_0+s_1,r_0,\pi,0)$. By spherical symmetry, we conclude that every point of the form $(t_0+s_1,r_0,\theta,\phi)$ lies in $J^+(p_0)$, for all $\theta, \phi$. Furthermore, for $s_2 > s_1$, there exists $0 < \varepsilon < r_0 - 2m$ such that for all $r \in (r_0 - \varepsilon, r_0 + \varepsilon)$, $(t_0+s_2,r,\theta,\phi)$ is in the causal future of $(t_0+s_1,r_0,\theta,\phi)$. We conclude that the set
\begin{equation*}
    A := \left\{ (t,r,\theta,\phi) : t > t_0 + s_2, r \in (r_0 - \varepsilon, r_0 + \varepsilon) \right\} \subset J^+(p_0).
\end{equation*}
But
\begin{equation*}
    \vol(A) = \int_{t_0+s_2}^\infty \int_{r_0 - \varepsilon}^{r_0+\varepsilon} 4 \pi r^2 dr dt = \infty,
\end{equation*}
and therefore also $\vol\left( I^+(p) \right) = \vol\left( J^+(p) \right) = \infty$.
\end{proof}

{Recall now the Reissner--Nordstr\"{o}m metric
\begin{equation*}
    g = - \left( 1 - \frac{2m}{r} + \frac{e^2}{r^2} \right) dt^2 + \left( 1 - \frac{2m}{r} + \frac{e^2}{r^2} \right)^{-1} dr^2 + r^2 d \Omega.
\end{equation*}
The key difference compared to Schwarzschild is that here the factor in front of $dt^2$ has two zeros $r_{\pm}$. In the sub-extremal case (meaning $\vert e \vert < m$), $0 < r_- < r_+$ are distinct and both lead to coordinate singularities. The hypersurface $\{ r = r_+ \}$ is the event horizon, while $\{ r = r_- \}$ is known as the Cauchy horizon. The portion $\real \times (r_-,\infty) \times S^2$ of the spacetime is globally hyperbolic and hence uniquely determined by the Einstein Equations given an initial data hypersurface. Beyond $r_-$, the coordinate expression of the metric still makes sense (until it reaches a curvature singularity at $r=0$). This extension, however, is rather arbitrary since it is no longer globally hyperbolic and hence not uniquely determined. Moreover, it contains another Cauchy horizon beyond which one can extend further, for instance by periodically ``stacking'' copies of the spacetime (this gives the maximal analytic extension, see \cite[Chap.\ 9]{Klaas}).}

\begin{prop} \label{prop:Kerr}
 {The globally hyperbolic region $\{ r > r_- \}$ of Reissner--Nordstr\"{o}m spacetime is volume incomplete, with $\vol \left( I^+(p) \right) < \infty$ for all $p$ with $r > r_+$ (i.e.\ all $p$ lying beyond the event horizon). On the other hand, the maximal analytic extension of Reissner--Nordstr\"{o}m spacetime is volume complete.}
\end{prop}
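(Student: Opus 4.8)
The statement splits into two halves, which I would treat separately.

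\emph{Volume incompleteness of $\{r > r_-\}$.} This region is chronological (being globally hyperbolic), so by Theorem~\ref{thm1} it suffices to exhibit a single point whose future has finite volume, and I would take a point $p$ beyond the event horizon, i.e.\ with $r_- < r_p < r_+$. Write $f(r) = 1 - 2m/r + e^2/r^2 = (r-r_+)(r-r_-)/r^2$; on $\{r_- < r < r_+\}$ one has $f < 0$, so $\partial_r$ is timelike, and fixing the time orientation so that $r$ decreases towards the future, every future-directed causal curve from $p$ has $r$ strictly decreasing, while $g(\dot\gamma,\dot\gamma) \le 0$ forces $|\dot t| \le |f(r)|^{-1}|\dot r|$. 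Parametrising such a curve by $r$ and integrating gives $|t - t_p| \le \Delta(r) := \int_r^{r_p} |f(\rho)|^{-1}\,d\rho$, so that inside $\{r > r_-\}$
\[
I^+(p) \subseteq \{\, (t,r,\theta,\phi) : r_- < r < r_p,\; |t - t_p| < \Delta(r) \,\}.
\]
Since the volume element of the Reissner--Nordstr\"om metric in these coordinates is $r^2\sin\theta\,dt\,dr\,d\theta\,d\phi$ (the two $f$-factors cancelling, exactly as in Schwarzschild), this gives $\vol(I^+(p)) \le 8\pi \int_{r_-}^{r_p}\Delta(r)\,r^2\,dr$.

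The one genuinely new point compared with the Schwarzschild interior is that $|f|^{-1}$ is \emph{not} bounded near the lower limit: as $r \to r_-$ it blows up like a constant times $(r-r_-)^{-1}$ --- the Cauchy horizon sits at ``$r = r_-$, $t = \pm\infty$'' --- so $\Delta(r) \to \infty$. The estimate nonetheless closes because the blow-up is only logarithmic: bounding $\rho^2 \le r_+^2$ and $r_+ - \rho \ge r_+ - r_p$ in the integrand yields $\Delta(r) \le \frac{r_+^2}{r_+ - r_p}\ln\frac{r_p - r_-}{r - r_-}$, and $\int_{r_-}^{r_p}\ln\frac{r_p - r_-}{r - r_-}\,dr = r_p - r_-$ is finite; hence $\vol(I^+(p)) < \infty$, and $\{r > r_-\}$ is future volume incomplete. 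For contrast --- and consistently with the Schwarzschild case --- an exterior point with $r_p > r_+$ instead has $\vol(I^+(p)) = \infty$, since $J^+(p)$ contains an infinite-volume slab $\{\, t > T,\; r_p - \varepsilon < r < r_p + \varepsilon \,\}$ exactly as in the Schwarzschild exterior.

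\emph{Volume completeness of the maximal analytic extension $\widetilde M$.} Here I would prove the stronger statement that $\vol(I^+(p)) = \infty$ for \emph{every} $p \in \widetilde M$, which immediately implies future (and, by the time-reflection symmetry, past) volume completeness. The maximal extension of sub-extremal Reissner--Nordstr\"om is an infinite vertical chain of blocks --- exterior, black-hole interior, the $\{r < r_-\}$ block containing the timelike singularity, white-hole interior, next exterior, and so on (see \cite[Chap.~9]{Klaas}) --- and from any $p$ one can steer a future-directed timelike curve up this chain into some asymptotically flat exterior block: if $p$ already lies in an exterior block there is nothing to do; from a black-hole block one crosses the Cauchy horizon $\{r = r_-\}$ into the singular block, keeps $r$ bounded away from $0$, turns $r$ around, crosses the next Cauchy horizon into a white-hole block, and then the event horizon $\{r = r_+\}$ into a fresh exterior block (and similarly, in fewer steps, from a singular or white-hole block). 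Once the curve reaches a point $q$ in an exterior block, $I^+(q) \subseteq I^+(p)$ already has infinite volume by the slab estimate above, so $\vol(I^+(p)) = \infty$.

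The two volume estimates are essentially bookkeeping. The step I expect to cost the most care is making the ``steering up the chain'' rigorous --- checking that a future-directed timelike curve can in fact be prolonged across each Cauchy horizon while staying away from $r = 0$, and does emerge into the next exterior region. This is cleanest in doubly null, Kruskal-type coordinates adapted to each successive pair of horizons, where the horizon crossings are transparent and ``avoiding the singularity'' reduces to an explicit choice of null-coordinate ranges; the Penrose diagram of sub-extremal Reissner--Nordstr\"om is the picture to organise this around.
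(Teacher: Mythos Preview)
Your argument is correct, but the first half takes a genuinely different route from the paper's. You do the Schwarzschild-style computation head-on: confine $I^+(p)$ to the set $\{\,r_-<r<r_p,\ |t-t_p|<\Delta(r)\,\}$, observe that the logarithmic blow-up of $\Delta$ at $r_-$ is integrable, and bound the volume explicitly. The paper instead avoids any integration: it passes to the maximal analytic extension, notes that the closure of $I^+_{r>r_-}(p)$ there is compact (this is read off from the Penrose diagram, where the diamond closes up at the bifurcation sphere of the Cauchy horizon), and concludes finiteness of the volume from compactness alone. Your approach is more elementary and self-contained --- it needs no extension and no conformal picture, only the coordinate chart already at hand --- while the paper's argument is shorter and conceptually cleaner once one accepts the Penrose-diagram machinery, and it generalises without change to Kerr (where your explicit integral would be considerably messier). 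For the second half, your ``steer up the chain into a fresh exterior block'' is exactly what the paper does, stated in one line; your remark that Kruskal-type coordinates across each successive horizon are the right tool for making the steering rigorous is apt.
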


\begin{proof}
 {Let $p \in \{r_- < r < r_+ \}$. Consider the chronological future $I_{r>r_-}^+(p)$ in the region $\{ r > r_- \}$, and take its closure in the full (extended) spacetime, as depicted in Figure \ref{fig:Pendiag}. The closure is compact, and therefore it has finite volume (w.r.t.\ the volume measure on the extended spacetime). But then the set $I_{r>r_-}^+(p)$ in the un-extended spacetime must also have finite volume. Notice here that Penrose diagrams are constructed using conformal diffeomorphisms (which in particular are homeomorphisms and hence preserve compactness), and that in the picture we supress the radial coordinates (but of course $S^2$ is compact). In the maximal analytic extension, on the other hand, since from any point it is possible to avoid the curvature singularity and enter the next periodic region, clearly the volume of the future is infinite for any point.}
\end{proof}

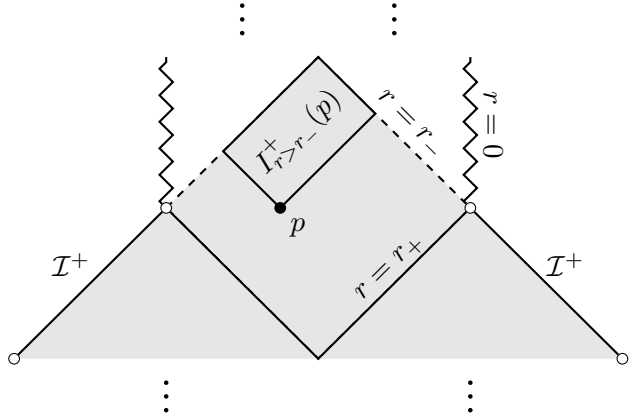
\begin{figure}
    \centering
    \begin{tikzpicture}
    \node[rotate=90] at (-1,4.5) {\huge{...}};
    \node[rotate=90] at (1,4.5) {\huge{...}};
    \node[rotate=90] at (-2,-0.5) {\huge{...}};
    \node[rotate=90] at (2,-0.5) {\huge{...}};
    \draw[decorate,decoration=zigzag,thick] (-2,2) -- (-2,4);
    \draw[decorate,decoration=zigzag,thick] (2,2) -- (2,4);
    \draw[dashed,fill=gray!20,thick] (-2,2) -- (0,4) -- (2,2) -- (0,0) -- cycle;
    \path[fill=gray!20] (-4,0) -- (-2,2) -- (0,0) -- (2,2) -- (4,0) -- cycle;
    \draw[thick] (-4,0) -- (-2,2) -- (0,0) -- (2,2) -- (4,0);
    \node at (3.25,1.25) {$\scri$};
    \node at (-3.25,1.25) {$\scri$};
    \node[rotate=315] at (1.2,3.1) {$r = r_-$};
    \node[rotate=45] at (0.95,1.2) {$r = r_+$};
    \node[rotate=270] at (2.3,3.1) {$r = 0$};
    \filldraw[white] (-4,0) circle (2pt);
    \draw (-4,0) circle (2pt);
    \filldraw[white] (4,0) circle (2pt);
    \draw (4,0) circle (2pt);
    \filldraw[white] (-2,2) circle (2pt);
    \draw (-2,2) circle (2pt);
    \filldraw[white] (2,2) circle (2pt);
    \draw (2,2) circle (2pt);
    \draw[fill=gray!20,thick] (-0.5,2) -- (-1.25,2.75) -- (0,4) -- (0.75,3.25) -- cycle;
    \node[rotate=45] at (-0.25,3) {$I_{r > r_{-}}^+(p)$};
    \filldraw[black] (-0.5,2) circle (2pt) node[anchor=north west] {$p$};
    \end{tikzpicture}
    \caption{A part of the Penrose diagram of Reisner--Nordstr\"om spacetime. The point $p$ and its future $I_{r > r_-}^+(p)$ within the globally hyperbolic region (shaded) are chosen as in the proof of Proposition \ref{prop:Kerr}.}
    \label{fig:Pendiag}
\end{figure}

{Notice that the same conclusion holds for sub-extremal Kerr spacetime, which has the same structure of an event horizon, beyond it a Cauchy horizon, and beyond that a curvature singularity (the so-called ring singularity). Beware that in the case of Kerr, there are closed timelike curves beyond the Cauchy horizon, making Theorem \ref{thm1} inapplicable. In any case, we conclude that the concept of volume incompleteness is able to capture the incompleteness of the maximal Cauchy developments, but it is unable to detect the curvature singularities in the extensions. Whether this is a drawback or not is debatable, since according to the strong cosmic censorship conjecture, anything lying beyond the Cauchy horizon is unphysical. In particular, the Cauchy horizon is believed to be unstable, as illustrated by Penrose's blue-shift argument (see \cite[Chap.\ 10]{Klaas}). In the next section, we discuss in a more abstract context how the notion of volume incompleteness can be related to cosmic censorship.}

\section{General black hole spacetimes} \label{sec:CC}

{In order to study black holes in an abstract way, we first need to give a definition of black hole that appropriately encodes the intuitive idea of a spacetime region that cannot be escaped. The usual approach is to define what it means to ``escape'' by defining future null infinity $\scri$, so that its chronological past $I^-(\scri)$ corresponds to the possible starting points of observers escaping to infinity. Hence its complement $M \setminus I^-(\scri)$ must be the black hole, and the boundary between the two, the event horizon. Traditionally, $\scri$ is defined as a subset of the conformal boundary of the spacetime, i.e.\ one conformally embeds the spacetime into a larger manifold. This raises the question of existence and uniqueness of such embeddings, which has been addressed by Chru\'sciel \cite{ChrExt}. One can also avoid this question altogether by defining $\scri$ via the causal boundary (instead of the conformal one), as has been done by Costa e Silva, Flores, and Herrera \cite{CFH1,CFH2}. A completely different way to define black hole is to instead formalize what it means to ``fall into the singularity'', which is the philosophy behind the works of M\"uller \cite{MueHor} (using geodesic incompleteness) and Wheeler \cite{Whe22} (via Scott and Szekeres' abstract boundary). We develop our own ``volumetric'' approach in this section.}

{Consider the subset of spacetime given by
\begin{equation*}
 \mathcal{B}  := \{x \in M \mid \vol \left( I^+(x) \right) < \infty \}.
\end{equation*}
If $(M,g)$ is chronological and $\mathcal{B} \neq \emptyset$, then by Theorem \ref{thm1}, $(M,g)$ is volume incomplete. We interpret $\mathcal{B}$ as the union of the interiors of all ``volume" black holes in our spacetime. For ease of exposition, we will assume that there is only one such black hole (i.e.\ that $\mathcal{B}$ is connected). By transitivity of the chronological relation, we immediately obtain the following result.}

\begin{prop} \label{prop:horizon}
The set $\mathcal{B}$ is a future set, {i.e.\ $I^+(\mathcal{B}) \subseteq \mathcal{B}$.}
\end{prop}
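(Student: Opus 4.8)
The plan is to show that $\mathcal{B}$ is closed under taking chronological futures, which is the definition of a future set. So I would take an arbitrary point $y \in I^+(\mathcal{B})$ and prove $y \in \mathcal{B}$, i.e.\ that $\vol(I^+(y)) < \infty$.

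First, by definition of $I^+(\mathcal{B}) = \bigcup_{x \in \mathcal{B}} I^+(x)$, the point $y$ lies in $I^+(x)$ for some $x \in \mathcal{B}$. By definition of $\mathcal{B}$, this chosen $x$ satisfies $\vol(I^+(x)) < \infty$. Next I would invoke transitivity of the chronological relation: since $y \in I^+(x)$, every point in $I^+(y)$ is also in $I^+(x)$, so $I^+(y) \subseteq I^+(x)$. Monotonicity of the volume measure then gives $\vol(I^+(y)) \le \vol(I^+(x)) < \infty$, hence $y \in \mathcal{B}$. Since $y \in I^+(\mathcal{B})$ was arbitrary, $I^+(\mathcal{B}) \subseteq \mathcal{B}$, which is exactly the assertion that $\mathcal{B}$ is a future set.

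There is essentially no obstacle here: the argument is a one-line consequence of transitivity together with monotonicity of volume, and does not even require chronology of $(M,g)$ (that hypothesis is only needed to then conclude volume incompleteness via Theorem \ref{thm1}). One could optionally remark that, as a consequence, $M \setminus \mathcal{B}$ is a past set, so that the event horizon $\partial \mathcal{B}$ is (the relevant part of) an achronal boundary; but this is not needed for the statement as written. If anything, the only point worth a word of care is the trivial case $\mathcal{B} = \emptyset$, for which $I^+(\emptyset) = \emptyset \subseteq \emptyset$ holds vacuously.
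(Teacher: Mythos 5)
Your argument is correct and is exactly the one the paper intends: the proposition is stated as an immediate consequence of transitivity of the chronological relation, which gives $I^+(y) \subseteq I^+(x)$ for $y \in I^+(x)$ with $x \in \mathcal{B}$, and monotonicity of $\vol$ does the rest. Your added remarks (no need for chronology here, and the vacuous case $\mathcal{B} = \emptyset$) are accurate but inessential.
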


{Proposition \ref{prop:horizon} tells us that observers inside of the region $\mathcal{B}$ cannot leave it. Thus the boundary $\partial \mathcal{B}$ corresponds to the event horizon (in the sense of a one-way membrane bounding the black hole interior). For connected spacetimes $M$, we have $\partial B \neq \emptyset$ unless $\mathcal{B}=M$. The latter case} would correspond to a big crunch (or an ultra massive spacetime \cite{SenUM}) rather than a black hole, since it would mean that the entire Universe is swallowed by the singularity. We thus concentrate on the case $\mathcal{B} \neq M$ in this section.

{Recall that Penrose's Weak Cosmic Censorship (WCC) conjecture informally states that singularities in physically realistic spacetimes cannot be globally naked, meaning  that they must be hidden by an event horizon. This is thus satisfied for volume singularities, in the above sense. Penrose's Strong Cosmic Censorship (SCC) conjecture further states that singularities in physically realistic spacetimes cannot be locally naked, meaning that the spacetime must be globally hyperbolic (see \cite[Chap.\ 10]{Klaas} for a detailed discussion of cosmic censorship). The following proposition naturally characterises global hyperbolicity of the region $\mathcal{B}$.}

\begin{prop} \label{Prop:Bisgh}
    {The following are equivalent:
    \begin{enumerate}
     \item For every future-inextendible future-directed causal curve $\gamma \colon [0,b) \to M$ whose starting point $x = \gamma(0)$ satisfies $\vol(I^+(x)) < \infty$, we have \[\lim_{s \to b} \vol\Big(I^+\big(\gamma(s)\big)\Big) = 0.\]
     \item The region $\mathcal{B} := \{x \in M \mid \vol \left( I^+(x) \right) < \infty \}$ is globally hyperbolic.
    \end{enumerate}}
\end{prop}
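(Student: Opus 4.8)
The plan is to prove the two implications separately, exploiting the fact (from Proposition~\ref{prop:horizon}) that $\mathcal{B}$ is a future set and hence an open subset of $M$ (it is open because $v$ is, informally, upper semicontinuous on the region where it is finite, or more simply because $x \mapsto \vol(I^+(x))$ drops on passing to the future and $\mathcal B = I^+(\mathcal B)$ can be written as a union of chronological futures of its points; one should check openness carefully, perhaps using Lemma~\ref{lem:subsetneq}). Throughout I would work with $\mathcal{B}$ equipped with the induced metric as a spacetime in its own right, and use the standard fact that a spacetime is globally hyperbolic iff it is causal and all causal diamonds $J^+(p)\cap J^-(q)$ are compact, or equivalently (Geroch) iff it admits a Cauchy hypersurface.

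For the direction \textbf{(2)$\Rightarrow$(1)}: assume $\mathcal{B}$ is globally hyperbolic, and let $\gamma \colon [0,b) \to M$ be future-inextendible and future-directed with $\gamma(0) \in \mathcal{B}$. Since $\mathcal{B}$ is a future set, the whole curve lies in $\mathcal{B}$, and being inextendible in $M$ it is inextendible in $\mathcal{B}$ as well (here one uses that $\mathcal B$ is open, so a limit point of $\gamma$ would have to lie in $\mathcal B$). By the limit curve theorem / the behaviour of inextendible causal curves in globally hyperbolic spacetimes, $\gamma$ cannot have a future endpoint and must "leave every compact set"; in particular it crosses every Cauchy hypersurface of $\mathcal{B}$. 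Now I would argue that $\lim_{s\to b} \vol(I^+(\gamma(s))) = 0$: the quantity $v(\gamma(s))$ is monotonically decreasing in $s$ by \eqref{eq:vdec}, so the limit $v_\infty \geq 0$ exists; if $v_\infty > 0$ then, exactly as in the proof of Theorem~\ref{thm1}, the nested intersection $\bigcap_s I^+(\gamma(s))$ has volume $v_\infty > 0$, hence is nonempty, yielding a point $x_\infty$ lying in the chronological future of every $\gamma(s)$. One then shows this forces $\gamma$ to remain in a compact causal diamond (namely $\overline{J^+(\gamma(0)) \cap J^-(q)}$ for suitable $q \gg x_\infty$), contradicting inextendibility in a globally hyperbolic spacetime. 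Hence $v_\infty = 0$.

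For the direction \textbf{(1)$\Rightarrow$(2)}: suppose $\mathcal{B}$ is not globally hyperbolic; I want to produce a future-inextendible causal curve $\gamma$ starting in $\mathcal{B}$ along which $v$ does not tend to $0$. Since $\mathcal B$ is chronological (as $M$ is) the failure of global hyperbolicity means some diamond $J^+(p)\cap J^-(q)$ with $p,q\in\mathcal B$ is non-compact. A standard construction then yields a future-inextendible causal curve $\gamma\colon[0,b)\to\mathcal B$ that is "imprisoned" in the compact-closure-failing diamond — more precisely, contained in $\overline{J^+(p)\cap J^-(q)}\cap\mathcal B$ but with no future endpoint in $\mathcal B$. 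Since $\gamma$ stays in $J^-(q)$, every $\gamma(s)$ satisfies $q\in I^+(\gamma(s))$ (after a harmless initial adjustment), so $I^+(q)\subseteq I^+(\gamma(s))$ for all $s$, giving $\lim_{s\to b}v(\gamma(s))\ge \vol(I^+(q))>0$. As $\gamma$ is future-inextendible in $M$ (inextendibility in the open set $\mathcal B$ plus the future-set property upgrade to inextendibility in $M$), this contradicts (1).

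The main obstacle I expect is the careful handling of inextendibility and of "leaving compact sets" as one passes between $M$ and the open subset $\mathcal{B}$: one must ensure that a future-inextendible curve in $M$ starting in $\mathcal B$ really is future-inextendible \emph{within $\mathcal B$} (which needs $\mathcal B$ open, hence a genuine verification that $\mathcal B$ is open — this is where I would spend the most care, likely via Lemma~\ref{lem:subsetneq} and continuity properties of $I^+$), and conversely that the imprisoned curve produced from a non-compact diamond does not sneak out of $\mathcal B$. A secondary subtlety is justifying $\vol\big(\bigcap_i I^+(\gamma(s_i))\big) = \lim_i v(\gamma(s_i))$ for a sequence $s_i \to b$ — this is just continuity of measure from above for the nested decreasing family, valid because the first set already has finite volume, and it is exactly the step already used in the proof of Theorem~\ref{thm1}, so it can be invoked with a short remark rather than reproved.
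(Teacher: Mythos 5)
Your direction $(2)\Rightarrow(1)$ is sound and is essentially the paper's argument: if $v(\gamma(s)):=\vol(I^+(\gamma(s)))\not\to 0$, the nested intersection $\bigcap_s I^+(\gamma(s))$ has positive volume, a point $y$ in it yields a causal diamond $J^+(\gamma(0))\cap J^-(y)$ containing the future-inextendible curve $\gamma$, and this contradicts compactness of diamonds (via non-imprisonment) in a globally hyperbolic region. The problem is in $(1)\Rightarrow(2)$, at the step ``Since $\gamma$ stays in $J^-(q)$, every $\gamma(s)$ satisfies $q\in I^+(\gamma(s))$.'' The future-inextendible limit curve obtained from a sequence of causal curves from $p$ to $q$ is only guaranteed to lie in the closure $\overline{J^+(p)\cap J^-(q)}$, not in $J^-(q)$ itself, and you cannot transfer the lower bound $v\geq\vol(I^+(q))$ from the approximating curves to the limit curve by semicontinuity: openness of $I^+$ plus Fatou gives only $v(x)\leq\liminf_n v(x_n)$ for $x_n\to x$, i.e.\ $v$ is \emph{lower} semicontinuous, which is the wrong direction. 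Closing this gap is the actual content of this implication in the paper: Lemma \ref{lem:tauiso} shows that condition (1) (regularity of the time-reversed cosmological volume function) forces the region to be reflecting, hence $v$ to be \emph{continuous}, and the proof is a genuinely nontrivial construction of an auxiliary inextendible timelike curve along which $v$ stays bounded away from zero, contradicting (1). Only with continuity in hand does the limit-curve argument of Theorem \ref{thm:voltimegh} go through. Your appeal to a ``standard construction'' of an imprisoned curve hides exactly this missing step.

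A secondary error: $\mathcal{B}$ is \emph{not} open in general, and your proposed justification via ``upper semicontinuity'' of $v$ cannot work, since $v$ is only lower semicontinuous as noted above. Concretely, in Schwarzschild the horizon points satisfy $\vol(I^+(p))<\infty$ while arbitrarily nearby exterior points have infinite future volume, so $\mathcal{B}=\{r\leq 2m\}$ is closed with nonempty interior. (The paper itself is somewhat cavalier in treating $\mathcal{B}$ as a spacetime, but your specific argument for openness is not salvageable, and any careful version of either direction must decide how to handle the boundary points of $\mathcal{B}$.) By contrast, the worry you flag about passing inextendibility from $M$ to $\mathcal{B}$ is a non-issue in the direction where you raise it: a curve with no future endpoint in $M$ a fortiori has none in any subset.
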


\begin{proof}
    $1 \implies 2$. The region $\mathcal{B}$ can be equipped with a time function $\tau(x) := \vol(I^+(x))$, which is just a time-reversed version of the regular cosmological volume function considered in Section \ref{sec:voltime}. It then follows from Theorem \ref{thm:voltimegh} that $\mathcal{B}$ is globally hyperbolic.

    {$2 \implies 1$. Assume $1$ is not true, meaning that there is some inextendible causal curve $\gamma\colon [0,b) \to M$ with $\vol(I^+(\gamma(0))) < \infty$ but  $\vol(I^+(\gamma(s))) \not\to 0$. By transitivity of the timelike relation, we have
    \begin{equation*}
    \vol \left( \bigcap_{s \in [0,b)} I^+(\gamma(s)) \right) = \lim_{s \to b} \vol\Big(I^+(\gamma(s))\Big) > 0,
    \end{equation*}
    where in particular the limit exists (but is non-zero by assumption). Thus we can choose a point $y \in \bigcap_{s \in [0,b)} I^+(\gamma(s)) \neq \emptyset$. But then the causal diamond $J(\gamma(0), y)$ contains the inextendible curve $\gamma \colon [0,b) \to M$, hence $J(\gamma(0), y)$ is non-compact, in contradiction to global hyperbolicity.}
\end{proof}

{We are left to wonder whether the singularities appearing in physically realistic spacetimes are volume singularities, and how to even make such a statement precise. Indeed, turning the cosmic censorship conjectures of Penrose into precise mathematical conjectures is not so straightforward. The most popluar approach when it comes to WCC, due to Christodoulou \cite{ChrSCC}, is to define ``physically realistic spacetime'' as a solution of the Einstein equations for generic asymptotically flat initial data, and ``possesing an event horizon'' as the Maximal Globally Hyperbolic Development (MGHD) of said initial data possesing a complete null infinity $\scri$ (cf. beginning of this section). Similarly, SCC from this point of view requests the MGHD to be inextendible (since it is globally hyperbolic by definition, so the question is rather if it ``is everything there is'').}

{In an attempt to use volume incompleteness as a tool to formalize the cosmic censorship conjectures (\`a la Penrose), a logical first step is to try to establish a ``volume version'' of Penrose's singularity theorem, which could read as follows.}

\begin{conj} \label{conj:Penvolsing}
    Let $(M,g)$ be {a globally hyperbolic spacetime satisfying the strong energy condition}, and $S \subset M$ a future trapped surface. Then $\vol\left(I^+(S)\right) < \infty$.
\end{conj}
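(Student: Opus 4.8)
\medskip
\noindent\textbf{Towards a proof.} The natural strategy is to adapt Hawking's timelike focusing argument to the codimension-two surface $S$, so as to confine $I^+(S)$ to a region of finite Lorentzian ``radius'' about $S$, and then to bound the volume of that region by a Lorentzian Bishop--Gromov inequality in the spirit of Treude and Grant \cite{TrGr13}. In outline: first produce a constant $\alpha_0 < \infty$, depending only on $S$, with $I^+(S) \subseteq \{q : d(S,q) < \alpha_0\}$, where $d$ denotes Lorentzian distance; then estimate $\vol$ of the latter region using the strong energy condition.

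\emph{Focusing step.} Since $S$ is future trapped, its mean curvature vector $\vec H$ is future-directed timelike, so $|\vec H_p| := \sqrt{-g(\vec H_p, \vec H_p)} > 0$ for all $p \in S$, and $\beta_0 := \min_{p \in S} |\vec H_p| > 0$ by compactness of $S$. For any future-directed unit timelike $v$ normal to $S$ at $p$, the initial expansion of the geodesic congruence normal to $S$ in direction $v$ is $g(\vec H_p, v) \le -|\vec H_p| \le -\beta_0$ (reverse Cauchy--Schwarz for future timelike vectors). The strong energy condition gives $\Ric(\gamma', \gamma') \ge 0$ along every timelike geodesic $\gamma$, so the Raychaudhuri equation forces the expansion of such a congruence to $-\infty$ --- i.e.\ a focal point of $S$ --- within proper time $\alpha_0 := 3/\beta_0$ (in dimension four). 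Because $(M,g)$ is globally hyperbolic, each $q \in I^+(S)$ is joined to $S$ by a length-maximizing geodesic, which is automatically normal to $S$ and free of interior focal points of $S$; hence $d(S,q) \le \alpha_0$, so $I^+(S) \subseteq \{q \in J^+(S) : d(S,q) \le \alpha_0\}$. (As a by-product this re-derives timelike geodesic incompleteness, since $I^+(S)$ is a future set yet contains no point at distance $\alpha_0$ from $S$, whereas a complete future timelike geodesic starting in $I^+(S)$ would stay in $I^+(S)$ and escape every such ball.)

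\emph{Volume step and main obstacle.} It remains to show $\vol\big(\{q \in J^+(S) : d(S,q) \le \alpha_0\}\big) < \infty$. Confinement alone is not enough: in a Minkowski-like region the set of points within a bounded Lorentzian distance of a compact set has infinite volume, so the strong energy condition must be used throughout $I^+(S)$, not merely near $S$. The tool is a volume comparison: parametrize $J^+(S)$ by the normal exponential map of $S$ together with $d(S,\cdot)$, and compare the radial Jacobian against that of the model warped product with warping function $s \mapsto 1 - \beta_0 s/3$ (determined by $\kappa = 0$ from $\Ric \ge 0$ and the mean-curvature bound $-\beta_0/3$). This affine function vanishes exactly at $s = \alpha_0$, so the model volume $\vol(S)\int_0^{\alpha_0}(1 - \beta_0 s/3)^3\, ds = \tfrac{3}{4\beta_0}\vol(S)$ is finite, and a Bishop--Gromov argument would bound the true volume by this. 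The main obstacle is precisely this comparison: Treude and Grant treat the future of a spacelike \emph{Cauchy hypersurface} (codimension one), whereas a trapped surface has codimension two, so its future timelike normal bundle carries a \emph{non-compact} one-parameter family of unit directions (a boost angle). One must therefore develop the comparison for the normal exponential map of a codimension-two submanifold --- handling the extra radial-type Jacobi field produced by varying the normal direction, controlling the cut locus, and showing that integration over the non-compact fibre of normal directions still converges (the cut time should decay to zero along highly boosted directions, and the extra Jacobi field contributes only a factor $\sim s$, so the model integrand stays bounded on $[0,\alpha_0]$). I expect this to go through, but carrying it out rigorously is the technical heart of the problem, and is presumably why the statement is recorded here only as a conjecture; an alternative route --- use Penrose's null focusing to get that $\partial I^+(S) = E^+(S)$ is compact, hence $I^+(S) = I^+(E^+(S))$, and seek a null analogue of the estimate --- runs into the same codimension-two difficulty.
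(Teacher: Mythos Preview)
The statement is a \emph{conjecture} in the paper; no proof is given. The paper's surrounding discussion is limited to two remarks: (i) the strong energy condition was imposed precisely because it enables Treude--Grant style volume comparison, and (ii) the difficulty, compared to the Hawking case treated in Section~\ref{sec:Haw}, is that a trapped surface has codimension two rather than one.

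Your proposal is entirely consistent with this. You correctly identify the statement as open, and your two-step strategy---timelike focusing from $S$ via Raychaudhuri to confine $I^+(S)$ to a bounded Lorentzian tube, followed by a Bishop--Gromov bound on that tube---is exactly the route the paper gestures at but does not pursue. You go well beyond the paper in making the obstacle concrete: the non-compact hyperbola of future unit timelike normals to a codimension-two surface, the extra ``radial'' Jacobi field from varying the boost angle, and the need for the cut time to decay along highly boosted directions so that the fibre integral converges. None of this appears in the paper, which simply flags codimension two as the issue and moves on. Your focusing step is sound (compactness of $S$ and global hyperbolicity do give a maximizing normal geodesic to each $q\in I^+(S)$, and the reverse Cauchy--Schwarz bound on the initial expansion is correct); the honest labelling of the volume step as the unresolved part matches the paper's own assessment.
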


{Penrose's theorem shows that singularities (in the geodesic sense) form under generic conditions, and Conjecture \ref{conj:Penvolsing} would further show that an event horizon is formed. While the assumptions are tentative, we have strengthened the null energy condition from Penrose's theorem to the strong energy condition. The latter gives a much stronger control over the volumes, in particular allowing for volume comparison results. We use this in Section \ref{sec:Haw} to obtain a volume analogue of the Hawking singularity theorem. Penrose's theorem, however, is more difficult to adapt given the fact that a trapped surface has codimension $2$ (while in Hawking's theorem we have a codimension $1$ submanifold).}

{A more ambitious programme is to use one of the known characterizations of singularities (e.g.\ causal geodesic incompleteness), and attempt to prove that under physically realistic assumptions, these are also volume singularities.}

\begin{conj}[Volumetric WCC] \label{conj:volWCC}
    {In a generic physically realistic spacetime, every future incomplete causal geodesic enters the set $\mathcal{B}$.}
\end{conj}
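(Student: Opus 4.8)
The plan is to reduce Conjecture~\ref{conj:volWCC} to two ingredients: a structural statement identifying the set $\mathcal{B}$, and a statement that is essentially a sharp geometric form of weak cosmic censorship. Throughout I take ``physically realistic spacetime'' in the sense of Christodoulou \cite{ChrSCC} --- $(M,g)$ is the maximal globally hyperbolic development of generic asymptotically flat initial data (vacuum, or with matter satisfying suitable energy conditions) --- and I fix a conformal completion equipped with a notion of future null infinity $\scri$ following Chru\'sciel \cite{ChrExt}, or equivalently work with the causal-boundary version of $\scri$ of Costa e Silva, Flores and Herrera \cite{CFH1,CFH2}.

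\textbf{Step 1: identifying $\mathcal{B}$ with the event-horizon black hole.} The first task is to prove that, for such a spacetime,
\[
 x \in I^-(\scri) \iff \vol\big(I^+(x)\big) = \infty,
\]
so that $\mathcal{B} = M \setminus I^-(\scri)$. For ``$\Rightarrow$'', if $x \in I^-(\scri)$ there is a future timelike curve from $x$ reaching far into an asymptotic end, so by quantitative asymptotic flatness $I^+(x)$ contains an open region uniformly close to a piece of Minkowski space of the form $\{t > t_0,\ r_1 < r < r_2\}$, which has infinite volume; this is the coordinate-free analogue of the exterior-Schwarzschild estimate and needs decay rates for $g$ near $\scri$. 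For ``$\Leftarrow$'', if $x \notin I^-(\scri)$ then no future causal curve from $x$ reaches $\scri$, so the closure of $I^+(x)$ in the completion --- or in a suitable smooth extension, as across the Cauchy horizon in Reissner--Nordstr\"om/Kerr --- is a compact set disjoint from $\scri$, and arguing exactly as in the proof of Proposition~\ref{prop:Kerr} one gets $\vol(I^+(x)) < \infty$. Together with Proposition~\ref{prop:horizon} and Theorem~\ref{thm1} this shows $\mathcal{B} = M \setminus I^-(\scri)$: the volumetric black hole is the usual one, $\partial\mathcal{B}$ is the event horizon, and the spacetime is volume incomplete precisely when it contains a black hole in the traditional sense.

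\textbf{Step 2: reduction to cosmic censorship.} By Step~1, Conjecture~\ref{conj:volWCC} becomes the assertion that every future incomplete causal geodesic $\gamma\colon[0,b)\to M$ eventually leaves $I^-(\scri)$. Suppose it does not: then $\gamma$ is an inextendible causal geodesic of finite affine length --- so, being inextendible, it does not converge to a point of $M$ --- with $\gamma(s)\in I^-(\scri)$ for every $s\in[0,b)$. Passing to the causal boundary, the past set $\bigcup_{s} I^-(\gamma(s))$ determines a terminal ideal point $\xi$. If $\xi$ lies on $\scri$, then a causal geodesic reaches null infinity at finite affine parameter, i.e.\ $\scri$ is incomplete; if instead $\xi$ is a singular ideal point, then $\gamma(s)\in I^-(\scri)$ for all $s$ says that the whole incomplete geodesic --- and hence the singularity into which it runs --- is visible from null infinity, i.e.\ globally naked. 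Both alternatives are ruled out, generically, by weak cosmic censorship, which is the contradiction sought. The causal-boundary formalism of \cite{CFH1,CFH2} is the natural setting in which to make ``$\xi$ lies on $\scri$'' and ``$\xi$ is globally naked'' precise.

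\textbf{Main obstacle.} The substance of the difficulty lies entirely in Step~2: excluding an incomplete causal geodesic that lingers forever inside $I^-(\scri)$ is a strong geometric form of weak cosmic censorship, and should therefore be expected to be roughly as hard as Penrose's conjecture itself --- which is also why the genericity hypothesis is indispensable, playing the same role as in the modern Hawking--Penrose theorems and excluding the known naked-singularity data such as Christodoulou's spherically symmetric self-gravitating scalar field; indeed, Example~\ref{ex:geoin} shows that without physicality the statement fails outright. A secondary, more technical hurdle is the ``$\Leftarrow$'' half of Step~1: without symmetry one needs sharp asymptotics of $g$ near $\scri$ for the Minkowski-comparison argument, and one must rule out the scenario that $I^+(x)$ accumulates on a curvature singularity so wildly that it has infinite volume even though $x\notin I^-(\scri)$ --- this should follow from the expected (spacelike, or Cauchy-horizon) structure of the future boundary together with $C^0$-type control near it, but is not automatic and may itself hinge on strong cosmic censorship.
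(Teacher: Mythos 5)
The statement you are addressing is an open conjecture: the paper offers no proof of Conjecture \ref{conj:volWCC}, and your proposal does not close it either. What you have written is a conditional reduction of the volumetric formulation of weak cosmic censorship back to the Penrose--Christodoulou formulation, and you say so yourself in your ``Main obstacle'' paragraph. Since the whole point of Conjecture \ref{conj:volWCC} is to offer an alternative, $\scri$-free way of making WCC precise, reducing it to ``WCC rules out both alternatives'' is not a proof but a (potentially valuable) statement of logical relationship between two open problems. That relationship is worth recording, but you should present it as such, and the two steps of the reduction each contain gaps that are more than technical. In Step 1, the direction $x \notin I^-(\scri) \Rightarrow \vol(I^+(x)) < \infty$ rests on the claim that the closure of $I^+(x)$ in a completion is compact and disjoint from $\scri$; this is exactly what fails in the maximal analytic extension of Reissner--Nordstr\"om (Proposition \ref{prop:Kerr} of the paper shows that spacetime is volume \emph{complete}), and even restricting to the MGHD you need the future boundary of the black hole region to be ``compact'' in a sense that is itself conjectural --- as you note, it may hinge on strong cosmic censorship, so your two steps are not independent of one another.

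In Step 2 the case analysis on the terminal ideal point $\xi$ is not exhaustive and one of its branches is not correctly handled. For a future-incomplete \emph{timelike} geodesic remaining in $I^-(\scri)$, the TIP need not be a point of $\scri$ nor a ``singular'' ideal point: it could be timelike infinity $i^+$ or another regular non-$\scri$ boundary point, and no standard formulation of WCC excludes an incomplete timelike geodesic whose past set exhausts $I^-(i^+)$. Moreover, the inference ``$\xi \in \scri$ implies $\scri$ is incomplete'' conflates affine incompleteness of a causal geodesic in $M$ with incompleteness of the null generators \emph{of} $\scri$, which is what Christodoulou's formulation actually asserts; these are not the same, and bridging them requires precisely the sharp asymptotics near $\scri$ that you defer. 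A more honest framing of your contribution would be: (i) a genuine, provable-looking lemma that asymptotic flatness with suitable decay gives $I^-(\scri) \cap \mathcal{B} = \emptyset$ (your ``$\Rightarrow$'' direction, generalizing the exterior Schwarzschild computation), which already shows $\mathcal{B} \subseteq M \setminus I^-(\scri)$ and hence that any incomplete geodesic entering $\mathcal{B}$ is censored in the traditional sense; and (ii) an explicitly conjectural converse whose difficulty you should not understate.
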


\begin{conj}[Volumetric SCC] \label{conj:volSCC}
    In a generic, physically realistic, and inextendible spacetime,
    \begin{equation*}
     \vol\Big(I^+(\gamma(s))\Big) \to 0
    \end{equation*}
    along every future incomplete causal geodesic.
\end{conj}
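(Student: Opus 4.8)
The plan is to reduce Conjecture~\ref{conj:volSCC} to two ingredients and then assemble it from Proposition~\ref{Prop:Bisgh}. The first ingredient is Volumetric WCC (Conjecture~\ref{conj:volWCC}): along a future incomplete causal geodesic $\gamma$, some $\gamma(s_0)$ lies in $\mathcal{B} = \{x : \vol(I^+(x)) < \infty\}$. The second is the global hyperbolicity of $\mathcal{B}$, which in this setting plays the role of classical strong cosmic censorship. Granting both, one finishes as follows: take $\gamma \colon [0,b) \to M$ maximally extended in its affine parameter (so, being an incomplete geodesic, it is future-inextendible), let $s_0$ be as above, and note that $\gamma|_{[s_0,b)}$ is a future-inextendible causal curve starting in $\mathcal{B}$; the implication $2 \Rightarrow 1$ of Proposition~\ref{Prop:Bisgh} then yields $\vol(I^+(\gamma(s))) \to 0$ as $s \to b$.

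For the second ingredient I would first record the soft observation that $\mathcal{B}$ is automatically globally hyperbolic once the ambient spacetime is. Indeed, for $x \in \mathcal{B}$ and $y \in J^+(x)$ one has $I^+(y) \subseteq I^+(J^+(x)) = I^+(x)$, so $\vol(I^+(y)) \le \vol(I^+(x)) < \infty$ and hence $J^+(x) \subseteq \mathcal{B}$ (a causal strengthening of Proposition~\ref{prop:horizon}). Consequently, for $p,q \in \mathcal{B}$ every $M$-causal curve from $p$ stays in $\mathcal{B}$, so the causal diamond $J(p,q)$ computed in $\mathcal{B}$ coincides with the one computed in $M$, which is compact; together with causality, inherited from $M$, this gives global hyperbolicity of $\mathcal{B}$. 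In the Christodoulou-style formulation one instead works with the maximal globally hyperbolic development, and then global hyperbolicity of the relevant region is simply part of the hypotheses, while SCC is the assertion that this development is inextendible.

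The real obstacles are the two ingredients themselves, both of which are deep. Volumetric WCC is open; the natural route to it goes through Conjecture~\ref{conj:Penvolsing} combined with the expectation --- made rigorous for generic asymptotically flat data via the formation-of-trapped-surfaces theory --- that the black hole interior is contained in the causal future of a trapped surface $S$, so that a causal geodesic captured by the black hole eventually satisfies $I^+(\gamma(s)) \subseteq I^+(S)$ with $\vol(I^+(S)) < \infty$. Proving global hyperbolicity of $\mathcal{B}$ unconditionally is nothing less than classical SCC, one of the central open problems of the field; absent that, one is forced into a conditional statement. Finally, the word ``generic'' is doing essential work and cannot be dispensed with: over-extremal Reissner--Nordstr\"{o}m, Taub--NUT, and spacetimes with naked singularities all have incomplete causal geodesics that escape to infinity or run into a locally naked singularity instead of entering $\mathcal{B}$, and excluding such behaviour is precisely the analytic heart of cosmic censorship --- it is not accessible by the causal-theoretic bookkeeping used above.
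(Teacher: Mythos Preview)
The statement is labeled a \emph{Conjecture} in the paper, and the paper offers no proof; it is presented as an open problem, with the surrounding discussion explaining only why the name ``SCC'' is appropriate (namely, that the conclusion would imply global hyperbolicity of $\mathcal{B}$ via Proposition~\ref{Prop:Bisgh}). There is therefore no ``paper's own proof'' to compare against.

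Your proposal is not a proof either, and you say so explicitly: you reduce the conjecture to Volumetric WCC together with global hyperbolicity of $\mathcal{B}$, and then correctly note that both ingredients are themselves open (the second being essentially classical SCC). The reduction via the implication $2\Rightarrow 1$ of Proposition~\ref{Prop:Bisgh} is sound and is exactly the circle of ideas the paper sketches, only with the arrow reversed: the paper stresses that Conjecture~\ref{conj:volSCC} \emph{implies} global hyperbolicity of $\mathcal{B}$, while you assume the latter to obtain the former. Your auxiliary observation that $\mathcal{B}$ inherits global hyperbolicity from a globally hyperbolic ambient $M$ (because $J^+(x)\subseteq\mathcal{B}$ for $x\in\mathcal{B}$, so causal diamonds in $\mathcal{B}$ agree with those in $M$) is correct and is not spelled out in the paper; but, as you yourself remark, invoking it forces one into the MGHD framework, which the paper explicitly says would make the SCC content trivial. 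In short, your write-up is an accurate and honest account of why the conjecture is hard, fully consistent with the paper's own stance, but neither you nor the paper supplies a proof.
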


{The names are chosen because Conjecture \ref{conj:volWCC} predicts the existence of an event horizon in the sense on Proposition \ref{prop:horizon}, while Conjecture \ref{conj:volSCC} is stronger and implies global hyperbolicity of the black hole interior by Proposition \ref{Prop:Bisgh}. We have formulated these conjectures in a bold way, but one can also think of them as criteria for the existence of an event horizon and for global hyperbolicity, if instead of considering all ``physically realistic'' spacetimes, one restricts to more specific situations. In any case, it of course remains to specify what ``physically realistic" means. In order to rule out trivial counterexamples (e.g.\ Minkowski with some region removed), one should assume either inextendibility of the spacetime, or that it is a MGHD of some initial data. Since Conjecture \ref{conj:volSCC} aims to predict global hyperbolicity, the second option would make it trivial, so we have formulated it required inextendibility. A weaker causality condition, such as causal simplicity, is still appropriate. One should also assume that the Einstein Equations hold for some reasonable matter model, either in a strict sense, or through energy conditions. Finally, the genericity assumption is needed (as it is in the IVP formulation), in order to rule out Kerr or Reissner--Nordstr\"{o}m as counterexamples. Again, it remains to specify exactly what ``generic'' means.}

\section{The cosmological volume singularity theorem} \label{sec:Haw}

In this section, we prove that the geodesic singularities predicted by Hawking's theorem are also volume singularities. Our proof is based on the work of Treude and Grant \cite{TrGr13}, where an alternative proof for Hawking's theorem is given, using volume comparison techniques. See also Graf \cite{Graf16} for a review of volume comparison results in the same spirit (assuming only $g \in C^{1,1}(M)$), and for the version of Hawking's theorem which we generalize \cite[Thm.~4.2]{Graf16}. This is an extended version thereof, in the sense that it allows for the Ricci tensor to be bounded from below by any constant (instead of zero, as in the original formulation by Hawking), as long as one adapts the mean curvature bound on the initial Cauchy surface accordingly (see also \cite[Thm.~1.2]{GaWo}).

We assume all Cauchy surfaces to be smooth and spacelike. Otherwise we follow the conventions of \cite{TrGr13}.

\begin{defn}[{\cite[Def.~5]{TrGr13}}] \label{def:CCC}
Let $(M,g)$ be a $(n+1)$-dimensional spactime, $\Sigma \subset M$ a Cauchy surface and $\kappa,\beta \in \real$ be constants. We say that $(M,g,\Sigma)$ satisfies the \emph{Cosmological Comparison Condition} $\ccc$ if
\begin{enumerate}
    \item $\Ric(v,v) \geq n \kappa$ for all $v \in TM$ with $g(v,v) = -1$,
    \item The mean curvature $H$ of $\Sigma$ satisfies $H \leq \beta$.
\end{enumerate}
\end{defn}

\begin{thm} \label{thm:sing}
Let $(M,g)$ be a globally hyperbolic $(n+1)$-dimensional spacetime and $\Sigma \subset M$ a Cauchy hypersurface with $\area(\Sigma) < \infty$. Assume that $M$ and $\Sigma$ satisfy the $\ccc$ with $\beta < 0$ and $\kappa \geq -\left(\beta/n\right)^2$, or with $\kappa > 0$ and any $\beta \in \real$. Then $\vol(I^+(\Sigma)) < \infty$ {and $(M,g)$ is volume incomplete}.
\end{thm}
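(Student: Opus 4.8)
The plan is to prove the quantitative part, $\vol(I^+(\Sigma)) < \infty$, and then read off volume incompleteness from Theorem \ref{thm1}. That reduction is immediate: a globally hyperbolic spacetime is in particular chronological, so Theorem \ref{thm1} applies; and for any $x_0 \in \Sigma$ (a Cauchy surface is nonempty) we have $I^+(x_0) \subseteq I^+(\Sigma)$, hence $\vol(I^+(x_0)) \le \vol(I^+(\Sigma)) < \infty$ once the estimate is established. So the whole content is the bound on $\vol(I^+(\Sigma))$.

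To get that bound I would foliate $I^+(\Sigma)$ by the future timelike geodesics leaving $\Sigma$ orthogonally. Let $\nu$ be the future unit normal of $\Sigma$, let $\Phi(x,s) := \gamma_x(s)$ be the flow of those geodesics, and let $s_+(x) \in (0,\infty]$ be the future cut time along $\gamma_x$. Because $M$ is globally hyperbolic and $\Sigma$ is Cauchy, every $p \in I^+(\Sigma)$ is joined to $\Sigma$ by a maximal future timelike geodesic, and maximality forces that geodesic to meet $\Sigma$ orthogonally; thus $\Phi$ maps $\{(x,s) : 0 < s < s_+(x)\}$ diffeomorphically onto $I^+(\Sigma)$ minus the future cut locus of $\Sigma$, and the cut locus is a null set. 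Writing $d\sigma$ for the Riemannian area measure induced on $\Sigma$ and $\mathcal{A}(s,x)$ for the Jacobian of $\Phi$ (equivalently, the ratio of the induced area element of the slice $\Phi(\cdot,s)$ to that of $\Sigma$ at $x$), the volume form decomposes as $\Phi^*(d\vol) = \mathcal{A}(s,x)\, ds \wedge d\sigma$, so that
\[ \vol\big(I^+(\Sigma)\big) = \int_\Sigma \int_0^{s_+(x)} \mathcal{A}(s,x)\, ds\, d\sigma(x). \]

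The core ingredient is then the volume comparison of Treude and Grant \cite{TrGr13} (see also \cite{Graf16}): under the $\ccc$ the Riccati (Raychaudhuri) equation for the shape operators of the slices $\Phi(\cdot,s)$ --- with $\Ric(v,v) \ge n\kappa$ bounding the focusing term and $H \le \beta$ fixing the initial condition --- yields, on one hand, the pointwise estimate $\mathcal{A}(s,x) \le \areakb(s)$ for $s \le \min\{s_+(x), s_{\kappa,\beta}\}$, and, on the other hand, the cut-time bound $s_+(x) \le s_{\kappa,\beta}$, where $\areakb$ and $s_{\kappa,\beta}$ are the area element and the focusing time of the constant-$\kappa$ comparison model with initial mean curvature $\beta$. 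The two parameter regimes in the hypothesis are exactly those for which this model congruence refocuses at a \emph{finite} $s_{\kappa,\beta}$ --- automatic when $\kappa > 0$, and requiring $\kappa \ge -(\beta/n)^2$ when $\beta < 0$ --- and then the model future volume $\volkb := \int_0^{s_{\kappa,\beta}} \areakb(s)\, ds$ is finite (the comparison area element is continuous on $[0,s_{\kappa,\beta}]$ and vanishes at the endpoint). Feeding the two bounds and $\area(\Sigma) = \int_\Sigma d\sigma < \infty$ into the integral formula gives
\[ \vol\big(I^+(\Sigma)\big) \le \area(\Sigma) \int_0^{s_{\kappa,\beta}} \areakb(s)\, ds = \area(\Sigma)\cdot \volkb < \infty, \]
and Theorem \ref{thm1} then finishes the proof.

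The main obstacle is organizational rather than conceptual: one must match carefully the parameter ranges on which the pointwise area bound and the cut-time bound hold, check that the future cut locus of $\Sigma$ is genuinely negligible for the volume integral, and invoke the Riccati comparison in precisely the form needed. All of this already lives in the volume-comparison framework of \cite{TrGr13, Graf16}, where the same $\ccc$ is used to prove geodesic incompleteness; the only new observation is that those very estimates also control $\vol(I^+(\Sigma))$, after which Theorem \ref{thm1} supplies volume incompleteness for free.
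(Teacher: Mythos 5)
Your overall strategy is the same as the paper's: reduce volume incompleteness to Theorem \ref{thm1} (correct and immediate), then bound $\vol(I^+(\Sigma))$ by foliating with the normal geodesic congruence and invoking the Treude--Grant comparison; the paper simply quotes \cite[Thm.~9]{TrGr13} for the resulting inequality $\vol B^+_\Sigma(t) \leq \frac{\area\Sigma}{\areakb B}\volkb B^+_B(t)$ rather than re-deriving the Jacobian estimate and the measure-zero statement for the cut locus. In the regimes $\kappa>0$ (any $\beta$) and $\beta<0$, $\kappa>-(\beta/n)^2$ your argument is essentially the paper's second case and is fine.

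There is, however, a genuine gap at the borderline case $\beta<0$, $\kappa=-(\beta/n)^2$, which \emph{is} included in the hypothesis ``$\kappa\geq-(\beta/n)^2$''. You assert that the admissible parameter ranges are exactly those for which the comparison congruence refocuses at a \emph{finite} time $s_{\kappa,\beta}$, and you justify finiteness of the model volume by continuity of the comparison area element on the compact interval $[0,s_{\kappa,\beta}]$ together with vanishing at the endpoint. In the borderline case the model warping function is $f_{\kappa,\beta}(t)=\exp(-\sqrt{|\kappa|}\,t)$: it never vanishes, the comparison space is future geodesically complete, and no finite refocusing time (hence no finite cut-time bound $s_+(x)\leq s_{\kappa,\beta}$) exists. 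The paper even emphasizes after the proof that in this case not all normal geodesics of $(M,g)$ need be incomplete, so the mechanism cannot be focusing. The conclusion still holds, but for a different reason: the improper model integral $\int_0^\infty f_{\kappa,\beta}(s)^n\,ds=\int_0^\infty e^{-n\sqrt{|\kappa|}s}\,ds$ converges, so $\lim_{t\to\infty}\volkb B^+_B(t)<\infty$ and the comparison inequality can be passed to the limit $t\to\infty$. This is precisely why the paper splits the proof into two cases and treats $\beta<0$, $\kappa=-(\beta/n)^2$ separately by explicit computation. The correct unifying statement is not ``finite refocusing time'' but ``finite model volume $\int_0^{b_{\kappa,\beta}} f_{\kappa,\beta}^n$'', which holds either because $b_{\kappa,\beta}<\infty$ (all other cases) or because the integrand decays exponentially (borderline case). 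With that repair your argument closes.
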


{A special case of $\Sigma$ having finite area is when $\Sigma$ is compact. Note also that if we only assume that the projection onto $\Sigma$ (following the flow of the normal vector field) of $I^+(x)$ has finite area for some $x \in I^+(\Sigma)$, we can still conclude that $\vol(I^+(x)) < \infty$.}

The proof proceeds by comparing the volumes in our spacetime $(M,g)$ to volumes in the model space $M_{\kappa,\beta} = (a_{\kappa,\beta},b_{\kappa,\beta}) \times N_{\kappa,\beta}$. Here $N_{\kappa,\beta}$ is one of the three Riemannian $n$-dimensional simply connected spaces of constant curvature ($\mathbb{S}^n,\real^n$, or $\mathbb{H}^n$). We equip $M_{\kappa,\beta}$ with the Lorentzian metric
 \begin{equation*}
     g_{\kappa,\beta} := -dt^2 + f_{\kappa,\beta}^2 h_{\kappa,\beta}
 \end{equation*}
 where $h_{\kappa,\beta}$ is the Riemannian metric on $N_{\kappa,\beta}$ and
 \begin{equation*}
     f_{\kappa,\beta} \colon (a_{\kappa,\beta},b_{\kappa,\beta}) \to \real.
 \end{equation*}
 The manifold $N_{\kappa,\beta}$, the (possibly infinite) interval $(a_{\kappa,\beta},b_{\kappa,\beta})$ and the function $f_{\kappa,\beta}$ are determined by the values of $\kappa,\beta$ as in \cite[Table 1]{Graf16}.
 
\begin{proof}[Proof of Theorem \ref{thm:sing}]
\textbf{Case $\beta < 0,$ $\kappa = -(\beta/n)^2$.} In this case
\begin{equation*}
     f_{\kappa,\beta} = \exp\left(-\sqrt{|\kappa|} t\right),
 \end{equation*}
 and in particular $f_{\kappa,\beta}$ is defined for all $t \in \real$. For any $B \subset \{f_{\kappa,\beta} =0\}$, \cite[Eqns.\ (6) \& (15)]{TrGr13} imply that
 \begin{align*}
     \volkb B^+_B (t) &= \frac{\areakb B}{f_{\kappa,\beta} (0)^n} \int_0^t f_{\kappa,\beta} (s)^n ds = \frac{\areakb B}{2n \sqrt{|\kappa|}} \left(1 - \exp\left(-\sqrt{|\kappa|} t\right) \right),
 \end{align*}
 where $B^+_B(t) = (0,t) \times B$. Then by \cite[Thm.\ 9]{TrGr13}
 \begin{equation*}
     \vol I^+(\Sigma) = \lim_{t \to \infty} \vol B^+_\Sigma (t) \leq \frac{\area \Sigma}{\areakb B} \lim_{t \to \infty} \volkb B^+_B (t) = \frac{\area \Sigma}{2n \sqrt{|\kappa|}} < \infty,
 \end{equation*}
 where $B^+_\Sigma (t)$ is the image of $(0,t)$ under the flow of the geodesics normal to $\Sigma$.
 
 \textbf{All other cases.} By \cite[Thm.\ 4.2]{Graf16}, $(M,g)$ every future-directed timelike geodesic starting from $\Sigma$ is incomplete, with uniform bound on the lengths equal to $b_{\kappa,\beta} < \infty$. Therefore
 \begin{equation*}
     I^+(\Sigma) = B^+_\Sigma(b_{\kappa,\beta}),
 \end{equation*}
 and by \cite[Thm.\ 9]{TrGr13}
 \begin{equation*}
     \vol B^+_\Sigma(b_{\kappa,\beta}) \leq \frac{\area \Sigma}{\areakb B} \lim_{t \to b_{\kappa,\beta}} \volkb B^+_B (t) < \infty,
 \end{equation*}
 where finiteness follows from a straightforward computation for each possible $f_{\kappa,\beta}$ (see \cite[Table 1]{Graf16}).

 {\textbf{End of proof.} Having proven that $\vol(I^+(\Sigma)) < \infty$, it directly follows that $\vol(I^+(x)) < \infty$ for all $x \in \Sigma \cup I^+(\Sigma)$ (in fact even for all $x \in M$). Since globally hyperbolic spacetimes are chronological, it follows by Theorem \ref{thm1} that $(M,g)$ is future volume incomplete.}
\end{proof}

Note that the proof works for arbitrary $\kappa$, $\beta$, as long as there is a uniform bound on the length of all geodesics starting from $\Sigma$. Therefore, in the class of spacetimes satisfying $\area(\Sigma) < \infty$ and the $\ccc$, a uniform bound on the length of all geodesics is strictly stronger than volume incompleteness. In particular, in the equality case $\kappa = -\left(\beta/n\right)^2$ {with $\beta < 0$}, there is a volume singularity but not all geodesics are incomplete: this is precisely what happens in the corresponding model space of constant curvature. In fact, it is a result of Andersson and Galloway \cite{AnGa} that this happens \emph{only} in the model space.

\begin{thm}[{Rigidity, \cite[Prop. 3.4]{AnGa}}]
 Let $(M,g)$ be a globally hyperbolic $(n+1)$-dimensional spacetime and $\Sigma \subset M$ a compact Cauchy hypersurface. Assume that $M$ and $\Sigma$ satisfy the $\ccc$ with $\beta \leq 0$ and $\kappa = -\left(\beta/n\right)^2$. If all geodesics normal to $\Sigma$ are future-complete, then $(J^+(\Sigma),g)$ is isometric to $[0,\infty) \times \Sigma$ with metric
 \[-dt^2 + \exp\left(-2 \sqrt{|\kappa|} t\right) h,\]
 where $h$ is a Riemannian metric on $\Sigma$, constant in $t$.
\end{thm}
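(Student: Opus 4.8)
The plan is to introduce Gaussian normal coordinates adapted to $\Sigma$ and to prove, via the Riccati (Raychaudhuri) comparison, that the second fundamental form of the level hypersurfaces coincides \emph{everywhere} with that of the model $-dt^2+e^{-2\sqrt{|\kappa|}t}h$; the warped-product form of $g$ can then be read off directly. Along a future-directed unit-speed geodesic $\gamma$ leaving $\Sigma$ orthogonally, the shape operator $S(t)$ of the congruence of such geodesics satisfies the matrix Riccati equation $\nabla_t S+S^2+R_{\dot\gamma}=0$, where $R_{\dot\gamma}(X)=R(X,\dot\gamma)\dot\gamma$. Taking the trace and using $\Ric(\dot\gamma,\dot\gamma)\ge n\kappa$ together with $\operatorname{tr}(S^2)\ge(\operatorname{tr}S)^2/n$, the mean curvature $H(t)=\operatorname{tr}S(t)$ (in the sign convention for which the model slices have mean curvature exactly $\beta$) satisfies $H'\le-n\kappa-H^2/n$ with $H(0)\le\beta$. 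Since $\kappa=-(\beta/n)^2$, the constant function $\beta$ solves the comparison ODE $u'=-n\kappa-u^2/n$, $u(0)=\beta$, and exists for all $t\ge 0$, so the Riccati comparison principle gives $H(t)\le\beta$ on the whole domain of $\gamma$.

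This comparison is rigid. Writing $\phi:=\beta-H\ge 0$, a short computation using $n\kappa=-\beta^2/n$ and $\beta\le 0$ yields $\phi'\ge\phi(\phi-2\beta)/n\ge\phi^2/n\ge 0$. Hence either $\phi\equiv 0$, i.e.\ $H\equiv\beta$ on all of $[0,\infty)$ — in which case $\operatorname{tr}(S^2)=\beta^2/n$ forces the trace-free part of $S$ to vanish, so $S\equiv(\beta/n)\,\mathrm{Id}$ and $\Ric(\dot\gamma,\dot\gamma)\equiv n\kappa$ along $\gamma$ — or else $\phi(t_0)>0$ for some $t_0$, and then $\phi'\ge\phi^2/n$ forces $\phi\to+\infty$ (that is, $H\to-\infty$) in finite affine parameter, producing a focal point of $\Sigma$ along $\gamma$. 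In particular, a normal geodesic is focal-point-free exactly when it is ``rigid'' in the first sense.

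It therefore suffices to prove that $\mathcal G:=\{q\in\Sigma:\gamma_q\text{ is focal-point-free}\}$ equals $\Sigma$, where $\gamma_q$ denotes the normal geodesic from $q$. The set $\mathcal G$ is closed, since a locally uniform limit of rigid flow lines is rigid ($H_{q_k}\equiv\beta$ passes to the limit, which then has no focal point). It is nonempty: fixing $q\in\Sigma$, the completeness hypothesis gives points $p_k:=\gamma_q(k)\in I^+(\Sigma)$ with $d(\Sigma,p_k)\ge k\to\infty$; let $\sigma_k$ be a maximal timelike geodesic from $\Sigma$ to $p_k$, which exists and meets $\Sigma$ orthogonally since $(M,g)$ is globally hyperbolic with Cauchy surface $\Sigma$; as $\Sigma$ is compact, the initial vectors of the $\sigma_k$ lie in a compact set, so a subsequence converges to a future-complete normal geodesic $\sigma_\infty$ which, being a limit of maximizing segments, is maximizing on every compact subinterval, hence free of focal points — so its base point lies in $\mathcal G$. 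The remaining point, that $\mathcal G$ is open and therefore all of the connected hypersurface $\Sigma$, is the main obstacle: one must rule out a sequence $q_k\to q\in\mathcal G$ along whose geodesics a focal point appears, necessarily at later and later parameter. I would try to do this either by a maximum-principle argument for the Lorentzian distance $x\mapsto d(\Sigma,x)$ — which the above makes finite and whose level sets have mean curvature $\le\beta$, while along rays it realizes the model value — in the spirit of the proof of the Lorentzian splitting theorem; or by invoking the rigidity (equality) case of the Lorentzian volume comparison of Treude--Grant \cite{TrGr13} underlying Theorem~\ref{thm:sing}: completeness of \emph{all} normal geodesics identifies $J^+(\Sigma)$ with the full flow-out of $\Sigma$, so a positive-measure ``deficit set'' $\Sigma\setminus\mathcal G$ (on whose flow lines $H$ drops strictly below $\beta$) would force $\vol(I^+(\Sigma))$ to be strictly below the model value $\area(\Sigma)/(2n\sqrt{|\kappa|})$ computed in the proof of Theorem~\ref{thm:sing}, and one would then seek a contradiction with the value forced by rigidity on $\mathcal G$.

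Granting $\mathcal G=\Sigma$: every normal geodesic is complete and focal-point-free with $S\equiv(\beta/n)\,\mathrm{Id}$, so the normal exponential map $\Phi\colon[0,\infty)\times\Sigma\to M$ is an immersion; it is injective, and its image is $J^+(\Sigma)$ (the image is open, is closed in $J^+(\Sigma)$ by completeness of the flow, and $\Sigma$ is Cauchy), so $\Phi$ is a diffeomorphism onto $J^+(\Sigma)$. In the induced coordinates $g=-dt^2+g_t$ with $g_0=h$, and $\partial_t g_t(X,X)=2\,g_t(S_tX,X)=(2\beta/n)\,g_t(X,X)=-2\sqrt{|\kappa|}\,g_t(X,X)$, hence $g_t=e^{-2\sqrt{|\kappa|}t}h$, which is the claimed isometry. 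When $\beta=0$ one has $\kappa=0$ and the model degenerates to the static product $-dt^2+h$; the Riccati comparison then only gives $H\le 0$, but the dichotomy and everything after it go through verbatim, and the statement becomes a Hawking-type rigidity (Lorentzian splitting) result.
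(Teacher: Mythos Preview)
The paper does not prove this theorem: it is quoted verbatim from Andersson--Galloway \cite[Prop.~3.4]{AnGa} (time-reversed and with a general $\beta$), so there is no ``paper's own proof'' to compare to. I therefore assess your attempt on its own merits.

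Your Riccati set-up and the dichotomy ``$H\equiv\beta$ and $S\equiv(\beta/n)\mathrm{Id}$, or focal point in finite time'' are correct and are exactly the infinitesimal input that Andersson--Galloway use. You also correctly isolate the real difficulty: passing from \emph{completeness} of all normal geodesics (the hypothesis) to \emph{absence of focal points} along all of them, i.e.\ $\mathcal G=\Sigma$. However, your proof is explicitly incomplete at this point, and neither of the two strategies you sketch closes the gap as stated. The volume-comparison idea does not produce a contradiction: rigidity on $\mathcal G$ only gives the lower bound $\vol(I^+(\Sigma))\ge \area(\mathcal G)/(2n\sqrt{|\kappa|})$, which is perfectly compatible with the upper bound $\area(\Sigma)/(2n\sqrt{|\kappa|})$ even when $\Sigma\setminus\mathcal G$ has positive measure, so nothing forces $\mathcal G=\Sigma$. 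Your first suggestion is the right one, but it needs substantially more than a maximum principle for $d(\Sigma,\cdot)$ alone: the actual argument constructs a timelike ray (as you do), forms its Busemann function $b$, shows that the level sets of $b$ have mean curvature $\ge\beta$ in the support sense while those of $d(\Sigma,\cdot)$ have mean curvature $\le\beta$, and then invokes the geometric maximum principle for rough hypersurfaces to conclude that the two functions agree up to a constant; this is what forces $d(\Sigma,\cdot)$ to be smooth everywhere and the level sets to be umbilic with $S\equiv(\beta/n)\mathrm{Id}$, hence $\mathcal G=\Sigma$.

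A secondary gap: in your final paragraph you assert that the normal exponential map $\Phi$ is injective, but absence of focal points only gives that $\Phi$ is an immersion; two distinct normal geodesics could still intersect (cut points). Injectivity follows once you know every $\gamma_q$ is globally maximizing to $\Sigma$ (so $t=d(\Sigma,\Phi(t,q))$ is well defined) \emph{and} that $d(\Sigma,\cdot)$ is smooth---both of which come out of the Busemann/maximum-principle step above, not from the focal-point statement alone.
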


Note that \cite[Prop. 3.4]{AnGa} is formulated for $\beta = -n$ (and time-reversed), but the proof can easily be adapted to arbitrary $\beta < 0$ (cf.\ \cite{GraSing}). The theorem was also generalized to Bakry--\'Emery spacetimes by Galloway and Woolgar \cite{GaWo}.

We believe that a volume singularity theorem with integral curvature bounds is also very much within reach by adapting geodesic versions due to Paeng \cite{PaeSing} and Graf et al \cite{GKOS}, both of which employ volume comparison techniques in their proof. Integral curvature bounds are weaker and more realistic than pointwise ones, especially when taking into account quantum properties of matter. In the future, we hope to prove volume singularity theorems with even weaker assumptions, in regimes where geodesic incompleteness might be too much to ask for.

\section{The cosmological volume function} \label{sec:voltime}

Andersson, Galloway, and Howard \cite{AGH98} introduced the notion of \emph{regular cosmological time function} for Big Bang spacetimes where all geodesics are past incomplete. It is defined as the supremum of the lengths of past directed causal curves starting at each point, and the adjective regular refers to the same properties 1 and 2 as in Definition \ref{def:volfct} below. Earlier and independently, Wald and Yip \cite{WaYi} had defined a similar time function, but without the regularity property. In this section, we propose a similar construction to that of Andersson, Galloway, and Howard for spacetimes with a past volume singularity. Our notion also bears resemblance to the volume functions of Geroch \cite{GerDD}, Hawking and Sachs \cite{HaSa} and Dieckmann \cite{Die}, but while they use an auxiliary finite measure, we use the canonical volume measure $\vol$ induced by the spacetime metric. Also Major, Rideout and Surya \cite{MRS} have a similar construction of a time function measuring the volume ``from a Cauchy surface", while we measure the volume ``from the Big Bang".

\begin{defn} \label{def:volfct}
The \emph{cosmological volume function} is defined as
\begin{equation*}
    \tau(x) := \vol (I^-(x)).
\end{equation*}
We say that $\tau$ is \emph{regular} if additionally
\begin{enumerate}
    \item $\tau(x) < \infty$ for all $x \in M$,
    \item $\tau \to 0$ along all past-inextendible causal curves.
\end{enumerate}
\end{defn}

Notice that there are spacetimes, such as Example \ref{ex:volin}, which have a regular cosmological volume function but not a regular cosmological time function in the sense of \cite{AGH98}. Both classes of functions share some key properties:

\begin{lem} \label{lem:tauiso}
Every regular cosmological volume function $\tau$ is {continuous and monotonously increasing along every future-directed causal curve.}
\end{lem}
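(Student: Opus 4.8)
The plan is to treat the two assertions separately, monotonicity being elementary and continuity being the substance; I will obtain continuity as lower plus upper semicontinuity of $\tau$. Recall that $\vol$ is a Radon measure, hence strictly positive on non-empty open sets and finite on compacta; in particular $\tau(p)=\vol(I^-(p))>0$ for every $p$, and $\vol(J^-(p))=\vol(\overline{I^-(p)})=\vol(I^-(p))=\tau(p)$, since $J^-(p)\subseteq\overline{I^-(p)}$ and $\partial I^-(p)$ is a closed achronal topological hypersurface, hence $\vol$-null. For monotonicity, if $\gamma$ is future-directed causal and $s_1<s_2$ then $\gamma(s_1)\in J^-(\gamma(s_2))$, so the push-up property gives $I^-(\gamma(s_1))\subseteq I^-(\gamma(s_2))$ and therefore $\tau(\gamma(s_1))\le\tau(\gamma(s_2))$; no regularity is needed. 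The inequality is strict when $\gamma$ is timelike: regularity forces $(M,g)$ to be chronological (a closed timelike curve through $p$, traversed indefinitely into the past, is a past-inextendible causal curve along which $\tau$ is periodic, hence non-increasing and periodic, hence constant and equal to $\tau(p)>0$, contradicting property $2$), so $\gamma(s_1)\notin I^-(\gamma(s_1))$ while $\gamma(s_1)\in I^-(\gamma(s_2))$, and the time-dual of Lemma~\ref{lem:subsetneq} applies.

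\textbf{Lower semicontinuity.} Fix $x$ and $x_m\to x$, and let $K\subseteq I^-(x)$ be compact. The family $\{I^-(p):p\ll x\}$ is an open cover of $I^-(x)$ directed by inclusion (given $p_1,p_2\ll x$, any $p\in I^-(x)\cap I^+(p_1)\cap I^+(p_2)\ne\emptyset$ dominates both), so $K\subseteq I^-(p_0)$ for some $p_0\ll x$. Since $I^+(p_0)$ is an open neighbourhood of $x$, we have $p_0\ll x_m$ for large $m$, whence $K\subseteq I^-(p_0)\subseteq I^-(x_m)$ and $\tau(x_m)\ge\vol(K)$. Letting $\vol(K)\uparrow\vol(I^-(x))$ yields $\liminf_m\tau(x_m)\ge\tau(x)$.

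\textbf{Upper semicontinuity.} This is the heart of the matter and is where property $2$ is used. First, $\vol(J^-(K))<\infty$ for every compact $K$: cover $K$ by finitely many sets $I^-(q_1),\dots,I^-(q_N)$ (every point of $M$ lies in the past of some point); since $J^-(y)\subseteq I^-(q_i)$ whenever $y\in I^-(q_i)$, we get $J^-(K)\subseteq\bigcup_i I^-(q_i)$ and hence $\vol(J^-(K))\le\sum_i\tau(q_i)<\infty$. Now let $x_m\to x$, fix a compact neighbourhood $K\ni x$, so $\bigcup_m I^-(x_m)\subseteq J^-(K)$ has finite volume, and set $W:=\limsup_m I^-(x_m)$; the reverse Fatou lemma gives $\limsup_m\tau(x_m)\le\vol(W)$. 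Since $W$ is a past set, it differs from its open interior $\Int W$ by a $\vol$-null hypersurface, and $\Int W\setminus\overline{I^-(x)}$ is open; so, to conclude $\vol(W)\le\vol(\overline{I^-(x)})=\tau(x)$, it suffices to prove $W\subseteq\overline{I^-(x)}$. Suppose $z\in W\setminus\overline{I^-(x)}$. Passing to a subsequence, $z\ll x_m$ for all $m$, realised by past-directed timelike curves $\gamma_m\colon[0,b_m]\to M$ from $x_m$ to $z$, parametrised by $h$-arclength for a fixed complete Riemannian metric $h$. If $\liminf_m b_m<\infty$, the limit curve theorem produces a past-directed causal curve from $x$ to $z$, so $z\in J^-(x)\subseteq\overline{I^-(x)}$, a contradiction; hence $b_m\to\infty$, and the limit curve theorem produces a past-inextendible causal curve $\lambda\colon[0,\infty)\to M$ with $\lambda(0)=x$ and, along a subsequence, $\gamma_m\to\lambda$ uniformly on compact parameter intervals. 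Then for each $T\ge 0$ and all large $m$ we have $b_m>T$, so $z\in J^-(\gamma_m(T))$; since $\gamma_m(T)\to\lambda(T)$, this gives $z\ll w$ for every $w\gg\lambda(T)$, i.e.\ $I^+(\lambda(T))\subseteq I^+(z)$, for every $T$. As $\lambda$ is past-inextendible, $\tau(\lambda(T))\to 0$ by property $2$; forcing a contradiction out of ``$I^+(\lambda(T))\subseteq I^+(z)$ with $\tau(\lambda(T))$ arbitrarily small'' — using monotonicity of $\tau$ along causal curves and $\tau>0$ — then shows that no such $z$ exists. Hence $W\subseteq\overline{I^-(x)}$, so $\limsup_m\tau(x_m)\le\tau(x)$, and with the previous paragraph $\tau$ is continuous.

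\textbf{Main obstacle.} The crux is the final step of the upper-semicontinuity argument: squeezing a contradiction out of the configuration in which $z$ lies in the causal past of points $\gamma_m(T)$ that converge to points $\lambda(T)$ of a past-inextendible curve along which $\tau\to 0$. Because the spacetime is not assumed globally hyperbolic — indeed not even causally simple — $J^+(z)$ need not be closed, so one cannot simply pass to the limit to get $z\in J^-(\lambda(T))$ and hence $\tau(z)\le\tau(\lambda(T))$. Making this rigorous requires a careful use of the limit curve theorem in the general, non-globally-hyperbolic form (e.g.\ Minguzzi's), together with the elementary properties already established (monotonicity, lower semicontinuity, positivity of $\vol$ on open sets); this is precisely where the full strength of regularity (property $2$) is consumed. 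The ancillary steps — finiteness of $\vol(J^-(K))$ to license the reverse Fatou lemma, and the reduction from the past set $W$ to its open interior using that achronal topological hypersurfaces are $\vol$-null — are routine but necessary.
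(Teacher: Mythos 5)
Your route is genuinely different from the paper's (which goes through Dieckmann's criterion that the volume function is upper semicontinuous if and only if the spacetime is past reflecting, and then shows that regularity forces past reflectivity), and most of it is sound: monotonicity, lower semicontinuity via inner regularity, the finiteness of $\vol(J^-(K))$, the reverse Fatou reduction to $W\subseteq\overline{I^-(x)}$, and the dichotomy coming from the limit curve theorem are all correct. But the decisive step is missing, and you flag it yourself. The configuration you arrive at --- a past-inextendible limit curve $\lambda$ issuing from $x$ with $I^+(\lambda(T))\subseteq I^+(z)$ for all $T$ and $\tau(\lambda(T))\to 0$ --- is \emph{not} contradictory by the means you invoke. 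Monotonicity gives $\tau(\lambda(T))\le\tau(p)$ for $p\in I^+(\lambda(T))$, and $I^+(\lambda(T))\subseteq I^+(z)$ gives $\tau(p)\ge\tau(z)$ for such $p$; these are perfectly compatible with $\tau(\lambda(T))$ being arbitrarily small, because containment of futures transfers no information about past volumes without precisely the reflectivity property that is at stake. Applying property 2 to $\lambda$ itself only yields $\tau(\lambda(T))\to0$, which is consistent, not contradictory.

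The missing idea is to manufacture a \emph{second} past-inextendible causal curve on which property 2 visibly fails: choose $z_l\in I^+(\lambda(l))\cap I^-(z_{l-1})$ within $h$-distance $1/l$ of $\lambda(l)$ (non-empty since $\lambda(l)\ll z_{l-1}$ by push-up) and concatenate the past-directed timelike segments from $z_l$ to $z_{l+1}$ to obtain a timelike curve $\tilde\lambda\subset I^+(\lambda)$ shadowing $\lambda$; it is past-inextendible because $z_l$ tracks $\lambda(l)$, which has no limit point as $l\to\infty$. Every point of $\tilde\lambda$ lies in some $I^+(\lambda(T))\subseteq I^+(z)$, hence $\tau\ge\tau(z)>0$ along $\tilde\lambda$, contradicting property 2. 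This is exactly the construction the paper performs (its curve $\tilde\gamma\subset I^+(\gamma)$ in the non-reflectivity argument); without it, or something equivalent, your proof does not close. The rest of your write-up survives, so inserting this construction at the point you label the ``main obstacle'' completes the argument.
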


{At the end of this section, we prove that $\tau$ is even strictly increasing, hence a time function.}

\begin{proof}
By transitivity of the chronological relation, we have that $\tau$ is increasing along every future-directed causal curve. Lower semicontinuity of $\tau$ follows from \cite[Prop.\ 1.4]{Die}, up to the caveat that the measures in \cite{Die} are finite, while $\vol$ is not (in general). However, by regularity of $\tau$, $\vol (I^-(x))$ is finite for every $x \in M$, which is all that is really needed.

By \cite[Prop.\ 1.6]{Die}, $\tau$ is upper semicontinuous if and only if $(M,g)$ is past reflecting. Thus our strategy is to assume that $(M,g)$ is not past reflecting, and to use this to construct a past-inextendible timelike curve $\tilde\gamma$ along which $\tau$ remains bounded below, contradicting Definition \ref{def:volfct}. The construction of $\tilde\gamma$ is similar to that in \cite[Prop.\ 2.1]{AGH98}.

For $(M,g)$ to not be past reflecting means that there exist points $x,y \in M$ such that $I^+(y) \subset I^+(x)$ but $I^-(x) \not\subset I^-(y)$. Let $(y_n)_n$ be a sequence of points in $I^+(y)$ that converges to $y$. Fix an auxiliary complete Riemannian metric $h$ on $M$. Since $I^+(y) \subset I^+(x)$, we may choose a sequence of past-directed causal curves $(\gamma_n)_n$ from $y_n$ to $x$ parametrized by $h$-arclength. Applying the limit curve theorem around $y$, we obtain a past-directed limit curve $\gamma$ starting at $\gamma(0) = y$ (up to passing to a subsequence, again denoted by $\gamma_n$). The curve $\gamma$ cannot reach $x$, since this would contradict our assumption that $I^-(x) \not\subset I^-(y)$. Hence $\gamma$ must be past-inextendible, and in particular parametrized on $[0,\infty)$.

The last step is to construct an inextendible past-directed timelike curve $\tilde \gamma \subset I^+(\gamma)$. It then follows that $\tau(\tilde\gamma) \geq \tau(x) >0$ by the following argument: Let $s \in [0, \infty)$. Then there exists a $t_s \in [0,\infty)$ such that $\gamma(t_s) \in I^-(\tilde\gamma(s))$. But then, by definition of $\gamma$ and openness of $I^-(\tilde\gamma(s))$, there exists an $n \in \nat$ such that $\gamma_n(t_s) \in I^-(\tilde\gamma(s))$. Since $\gamma_n(t_s) \in I^+(x)$, it follows that $\tau(\tilde\gamma(s)) \geq \tau(\gamma_n(t_s)) \geq \tau(x)$.

It remains to construct $\tilde\gamma$. Let $y_l := \gamma(l)$, and choose a $z_1 \in I^+ (y_1)$. By openness of $I^-(z_1)$, and since $y_2 \in I^-(y_1) \subset I^-(z_1)$, we may choose $z_2 \in I^-(z_1) \cap I^+(y_2) \cap B^h_{1/2}(y_2)$. Iterating this procedure, we obtain a sequence $(z_l)_l$ such that $z_l \in I^-(z_{l-1}) \cap I^+(y_l) \cap B^h_{1/l}(y_l)$. Then we construct $\tilde\gamma$ by joining all the past-directed timelike segments going from $z_l$ to $z_{l+1}$. Moreover, by contruction $\lim_{l \to \infty} z_l = \lim_{l \to \infty} y_l$, which does not exist, and hence $\tilde\gamma$ is indeed past-inextendible.

Summarizing, we have shown that if $(M,g)$ is not past-reflecting, then the cosmological volume function $\tau$ is not regular. Thus if $\tau$ is regular, $(M,g)$ must be past-reflecting, and then by \cite[Prop.\ 1.6]{Die}, $\tau$ is continuous.
\end{proof}

{We have seen that regularity of $\tau$ implies that $(M,g)$ must be past-reflecting. Next we show that it even forces $(M,g)$ to be globally hyperbolic. This is an interesting fact on its own, and will later aid us in proving that $\tau$ is a time function.}

\begin{thm} \label{thm:voltimegh}
If $(M,g)$ admits a regular cosmological volume function $\tau$, then $(M,g)$ is globally hyperbolic.
\end{thm}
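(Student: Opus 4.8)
The plan is to show the two defining properties of global hyperbolicity: that $(M,g)$ is causal, and that all causal diamonds $J(p,q)$ are compact. Since by Lemma \ref{lem:tauiso} a regular $\tau$ is continuous and non-decreasing along future-directed causal curves, and I will upgrade this to \emph{strictly} increasing, $\tau$ is a time function and in particular $(M,g)$ is causal (indeed stably causal); so the causality part is essentially free. To see that $\tau$ is strictly increasing: if $x \ll y$ then $I^-(x) \subsetneq I^-(y)$ by chronology (the inclusion is strict because $y \notin I^-(y)$ but $y \in \overline{I^-(y)}$, so we may pick a point of $I^-(y)$ not in $I^-(x)$, exactly as in the time-dual of Lemma \ref{lem:subsetneq}), and then the time-dual of Lemma \ref{lem:subsetneq} gives $\tau(x) < \tau(y)$; the general causal (not necessarily chronological) case follows by pushing slightly with a short timelike segment.

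The main work is compactness of causal diamonds. Fix $p, q \in M$ with $q \in J^+(p)$ and let $(r_n)$ be a sequence in $J(p,q)$; choose future-directed causal curves $\sigma_n$ from $p$ through $r_n$ to $q$. Fix an auxiliary complete Riemannian metric $h$ and parametrize by $h$-arclength. The key point is a uniform length bound: along any such $\sigma_n$, the continuous function $\tau \circ \sigma_n$ increases from $\tau(p)$ to $\tau(q)$, both finite by regularity. If the $h$-lengths of the $\sigma_n$ were unbounded, a limit curve argument (as in the proof of Lemma \ref{lem:tauiso}) would produce a past-inextendible or future-inextendible causal curve contained in $J(p,q)$, hence with $\tau$ bounded below by $\tau(p) > 0$ along a past-inextendible curve — contradicting regularity property 2. (More carefully: pick the reparametrization so that a limit curve through the accumulation point of $(r_n)$ is inextendible in at least one time direction; if it is past-inextendible we contradict regularity directly, and if it is future-inextendible we use the past-reflecting property established in Lemma \ref{lem:tauiso}, together with the dual construction there, to again manufacture a past-inextendible curve along which $\tau$ stays bounded away from $0$.) With a uniform $h$-length bound in hand, the limit curve theorem yields a causal limit curve $\sigma$ from $p$ to $q$, and the $r_n$ (after passing to a subsequence) converge to a point of $\sigma \subset J(p,q)$; thus $J(p,q)$ is compact.

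The hard part is making the limit-curve/non-escape argument fully rigorous: ensuring that when the $h$-lengths blow up one really obtains an \emph{inextendible} limit curve inside the closed diamond (this is where completeness of $h$ and careful choice of base points matter), and handling the dichotomy between a past-inextendible and a future-inextendible limit curve — the latter requiring the past-reflecting property and the auxiliary curve construction from the proof of Lemma \ref{lem:tauiso}, rather than a direct appeal to regularity. Once global hyperbolicity is established, it is worth recording (as the remark following the theorem promises) that $\tau$ is then automatically a Cauchy time function, since its level sets are met exactly once by every inextendible causal curve: $\tau \to 0$ in the past by regularity, $\tau$ is strictly increasing, and $\tau$ cannot stay bounded to the future along an inextendible causal curve in a globally hyperbolic spacetime by the same diamond-compactness argument applied in reverse.
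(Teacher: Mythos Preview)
Your overall strategy---causality plus compactness via the limit curve theorem, with regularity ruling out past-inextendible limit curves---is the same as the paper's, but you have overcomplicated both halves and introduced a genuine detour in the second.

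\textbf{Causality.} You try to establish causality by first upgrading $\tau$ to a \emph{strict} time function. This is circular as written: your argument that $I^-(x) \subsetneq I^-(y)$ for $x \ll y$ invokes chronology, which you have not yet proven; and the ``push with a short timelike segment'' for the general causal case would need past-distinguishing (so that $I^-(x)=I^-(y)$ forces $x=y$), which you also do not have. The paper avoids this entirely: if $\gamma$ is a closed causal curve, the push-up lemma gives $I^-(\gamma(s))=I^-(\gamma(t))$ for all $s,t$, so $\tau\circ\gamma$ is constant and positive, contradicting $\tau\to 0$ along the past-inextendible extension of $\gamma$. Strict monotonicity of $\tau$ is then proven \emph{after} global hyperbolicity (using that globally hyperbolic implies distinguishing), not before.

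\textbf{Compactness.} The paper uses the equivalent characterization that $(M,g)$ is globally hyperbolic iff it is causal and the space of causal curves between any two points is compact. Given past-directed causal curves $\gamma_n$ from $z$ to $y$, the limit curve theorem based at $z$ yields a limit curve that either reaches $y$ or is past-inextendible; in the latter case continuity of $\tau$ gives $\tau\circ\gamma_\infty \ge \tau(y)>0$, contradicting regularity. That is the whole argument. Your route through sequences $(r_n)\subset J(p,q)$ forces you to confront a possible \emph{future}-inextendible limit curve, for which regularity says nothing directly. Your proposed fix---invoking past-reflectivity and the auxiliary construction from Lemma~\ref{lem:tauiso}---is vague, and it is unclear it can be made to work. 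The difficulty simply disappears if you anchor the limit curve at the future endpoint $q$ and run it past-directed, so that the only alternative to convergence is past-inextendibility.
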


\begin{proof}
{First we show that $(M,g)$ is causal. Assume it is not, meaning there is a closed causal curve $\gamma \colon I \to M$. Then, by the push-up lemma, $I^-(\gamma(s)) = I^-(\gamma(t))$ for all $s,t \in I$. It follows that $s \mapsto \tau(\gamma(s))$ is the constant function, contradicting regularity of $\tau$. Hence $(M,g)$ is causal, and it suffices to prove that the space of causal curves between any two points is compact in the $C^0$-topology in order to show global hyperbolicity \cite[Thm.\ 3.79]{MiSa}.}

{Let $y,z \in M$ be two points such that $y \in J^-(z)$, and let $(\gamma_n)_n$ denote any sequence of past-directed causal curves starting at $z$ and ending at $y$. By the limit curve theorem \cite[Thm.\ 2.53]{MinRev}, there exists a limit curve $\gamma_\infty$ starting at $z$, which either ends at $y$ or is past-inextendible. In the latter case, since by Lemma \ref{lem:tauiso} $\tau$ is continuous and $\tau \circ \gamma_n$ is bounded below by $\tau(y)$ for every $n$, also $\tau \circ \gamma_\infty$ is bounded below by $\tau(y) > 0$. This contradicts regularity of $\tau$, so we conclude that the limit curve $\gamma_\infty$ ends at $y$. We have thus shown that the space of causal curves between $y$ and $z$ is compact in the $C^0$-topology, and conclude that $(M,g)$ is globally hyperbolic.}
\end{proof}

\begin{thm}
 {Every regular cosmological volume function $\tau$ is a time function. Moreover, the level sets of $\tau$ are future Cauchy surfaces. If $\vol(I^-(\gamma(s))) \to \infty$ along all inextendible future directed causal curves $\gamma$, then the level sets are Cauchy surfaces.}
\end{thm}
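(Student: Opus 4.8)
The plan is to lean on the two facts already established: by Lemma \ref{lem:tauiso} the function $\tau$ is continuous and weakly increasing along future-directed causal curves, and by Theorem \ref{thm:voltimegh} the spacetime $(M,g)$ is globally hyperbolic. Granting these, it remains only to upgrade the monotonicity from weak to strict in order to conclude that $\tau$ is a time function; the level-set statements will then follow from a single intermediate value argument that exploits regularity property 2 (and, for the last claim, the extra hypothesis).

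For the strict monotonicity, I would take a future-directed causal curve from $x$ to $y$ with $x \ne y$. The push-up property gives $I^-(x) \subseteq I^-(y)$, and since a globally hyperbolic spacetime is past-distinguishing, $I^-(x) = I^-(y)$ would force $x = y$; hence $I^-(x) \subsetneq I^-(y)$. Regularity gives $\vol(I^-(y)) < \infty$, so the time-reversed form of Lemma \ref{lem:subsetneq} yields $\tau(x) = \vol(I^-(x)) < \vol(I^-(y)) = \tau(y)$. Combined with continuity, this is precisely the statement that $\tau$ is a time function.

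For the level sets, fix $c > 0$ and set $S_c := \tau^{-1}(c)$, which is closed by continuity and acausal by the strict monotonicity just proved. Let $\sigma$ be a past-directed, past-inextendible causal curve with initial point $p$ satisfying $\tau(p) \ge c$. Then $\tau \circ \sigma$ is continuous and strictly decreasing, and by regularity property 2 it tends to $0$, so by the intermediate value theorem $\sigma$ attains the value $c$, and by strict monotonicity it does so exactly once (namely at $p$ when $\tau(p) = c$). Since every point of $J^+(S_c) \setminus S_c$ has $\tau$-value greater than $c$ (again by strict monotonicity), this says that every past-inextendible causal curve issuing from $J^+(S_c)$ meets $S_c$ exactly once; together with the standard fact that a closed, acausal, edgeless set is a topological hypersurface --- edgelessness coming from running the same intermediate value argument inside a small neighbourhood --- this is exactly what it means for $S_c$ to be a future Cauchy surface (equivalently, $H^+(S_c) = \emptyset$). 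If moreover $\vol(I^-(\gamma(s))) \to \infty$ along every future-inextendible future-directed causal curve, then along an arbitrary inextendible causal curve $\tau$ is continuous and sweeps strictly from $0$ up to $\infty$, so it hits the value $c$ exactly once; hence $S_c$ meets every inextendible causal curve exactly once and is a Cauchy surface.

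I do not expect a genuine obstacle here. The only step with real content is the strict-monotonicity claim, and that collapses to two standard ingredients: global hyperbolicity implies past-distinguishing, together with the past analogue of Lemma \ref{lem:subsetneq}. Everything else is bookkeeping --- applying the intermediate value theorem on the correct half-open parameter ranges, handling the degenerate case where the curve starts on $S_c$, and quoting the standard topological-hypersurface lemma so that the level sets genuinely merit the name ``(future) Cauchy surface''. If any point deserves care, it is fixing the precise definition of ``future Cauchy surface'' one works with and verifying edgelessness of $S_c$; both reduce to the same local intermediate value argument.
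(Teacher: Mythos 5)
Your proposal is correct and follows essentially the same route as the paper: strict monotonicity via the time-reversed Lemma \ref{lem:subsetneq} combined with the distinguishing property granted by global hyperbolicity (Theorem \ref{thm:voltimegh}), then the level-set claims via continuity and the limiting behaviour of $\tau$ along inextendible curves. The only cosmetic difference is the order in which you invoke the lemma and the distinguishing property; the paper phrases the future-Cauchy step as $D^+(S_t)=J^+(S_t)$ rather than through an explicit intermediate value argument, but the content is identical.
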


\begin{proof}
 {Suppose that $x$ and $y$ are connected by a future-directed causal curve $\gamma : [0,1] \to M$. Then $I^-(x) \subseteq I^-(y)$ and hence $\tau(x) \leq \tau(y)$. By Lemma \ref{lem:subsetneq}, $\tau(x) = \tau(y)$ is only possible if $I^-(x) = I^-(y)$. Since by Theorem \ref{thm:voltimegh} the spacetime $(M,g)$ is globally hyperbolic, by the causal ladder \cite[Fig.\ 3.3]{BEE} it must be distinguishing, and therefore $I^-(x) = I^-(y)$ implies $x=y$. This concludes the proof that $\tau$ is a time function, since it follows that $\tau$ must be strictly increasing along (non-constant) future-directed causal curves, and continuity of $\tau$ was already established in Lemma \ref{lem:tauiso}.}

 {Next we show that the level sets $S_t := \{x \in M \mid \tau(x) = t\}$ are future Cauchy surfaces (cf.\ \cite[Prop.\ 2.2]{AGH98})}. Since $\tau$ is a time function, every $S_t$ is closed, acausal and edgeless. Moreover, since $\tau \to 0$ along every past-inextendible causal curve, it follows that $D^+(S_t) = \{x \in M \mid \tau(x) \geq t\} = J^+(S_t)$, which by defninition means that $S_t$ is future Cauchy. Here $D^+(S_t)$ is the future domain of dependence, meaning the set of all points $x \in M$ such that every past-inextendible causal curve starting at $x$ intersects $S_t$.

 Finally, if $\vol(I^-(\gamma(s))) \to \infty$ along all inextendible future directed causal curves $\gamma$, then for every such $\gamma$, $\tau(\gamma(s)) \to \infty$ towards the future. Since by regularity of $\tau$ also $\tau(\gamma(s)) \to 0$ towards the past, it follows by continuity of $\tau$ that every inextendible $\gamma$ intersects every level set $S_t$, and hence every $S_t$ is a Cauchy surface (for $0<t<\infty$).
\end{proof}

In fact, it suffices for the last statement to hold that $\lim \vol(I^-(\gamma(s)))$ attains the same (possibly finite) value for all $\gamma$, but unless the spacetime is very symmetric, this is only to be expected if the limit value is actually infinity. In any case, the fact that the level sets are future Cauchy, combined with \cite[Thm.~1.9~\&~Rem.~3.6]{BuGHnull}, yields the following corollary.

\begin{cor}
    Let $(M,g)$ be a spacetime with regular cosmological volume function $\tau$. Then the causal relation on $(M,g)$ is (globally) encoded in the null distance $\hat{d}_\tau$.
\end{cor}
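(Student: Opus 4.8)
The plan is to verify that $(M,g)$ equipped with the cosmological volume function $\tau$ satisfies the hypotheses of \cite[Thm.~1.9]{BuGHnull} and then invoke that theorem. Recall that, following Sormani and Vega, the null distance of the continuous function $\tau$ is
\[ \hat{d}_\tau(p,q) := \inf_\beta \sum_{i=1}^{k} \big\lvert \tau(\beta(s_i)) - \tau(\beta(s_{i-1})) \big\rvert, \]
where $\beta$ ranges over all piecewise-causal curves from $p$ to $q$ with breakpoints $s_0 < \dots < s_k$; to say that $\hat{d}_\tau$ \emph{encodes the causal relation} means that $q \in J^+(p)$ if and only if $\tau(p) \leq \tau(q)$ and $\hat{d}_\tau(p,q) = \tau(q) - \tau(p)$. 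The ``only if'' implication holds for every continuous time function --- it is immediate from the reverse triangle inequality $\hat{d}_\tau(p,q) \geq \lvert \tau(q) - \tau(p) \rvert$ together with the observation that a single future-directed causal segment from $p$ to $q$ has null length exactly $\tau(q) - \tau(p)$ --- so the substance lies entirely in the converse.

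First I would gather the facts already established in this section: $\tau$ is continuous by Lemma \ref{lem:tauiso}; by the preceding theorem $\tau$ is a time function and each of its level sets $S_t = \{\tau = t\}$ is a future Cauchy surface; and by Theorem \ref{thm:voltimegh} the spacetime $(M,g)$ is globally hyperbolic. These are precisely the structural inputs demanded by the causality-encoding results of \cite{BuGHnull}, namely a globally hyperbolic spacetime carrying a continuous time function whose level sets are Cauchy. Since in the present generality we have only shown the $S_t$ to be \emph{future} Cauchy, the statement to quote is \cite[Thm.~1.9]{BuGHnull} together with \cite[Rem.~3.6]{BuGHnull}, the latter recording that future Cauchy level sets already suffice; combining the two yields exactly the assertion of the corollary.

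The one point that needs care --- and hence the only real, if minor, obstacle --- is to be sure that the weakening from Cauchy to future Cauchy is legitimate for the \emph{full} causality-encoding conclusion, rather than merely for auxiliary properties such as definiteness of $\hat{d}_\tau$ or its inducing the manifold topology. In the converse implication one takes a sequence of piecewise-causal curves from $p$ to $q$ whose null lengths decrease to $\tau(q) - \tau(p)$; the past-directed segments of these curves then carry total $\tau$-variation tending to zero, and a limit-curve argument, available thanks to global hyperbolicity, should extract a future-directed causal curve from $p$ to $q$. It is the future-Cauchy condition that prevents such an approximating curve from escaping to infinity before it climbs to the level $\tau(q)$, so once one confirms that \cite[Rem.~3.6]{BuGHnull} indeed covers this case, no new idea beyond the cited results is required and the corollary follows.
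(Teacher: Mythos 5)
Your proposal is correct and follows essentially the same route as the paper: the paper likewise assembles the facts that $\tau$ is a continuous time function whose level sets are future Cauchy surfaces and then directly invokes \cite[Thm.~1.9~\&~Rem.~3.6]{BuGHnull}, with the remark covering the weakening from Cauchy to future Cauchy. Your additional unpacking of what ``encodes the causal relation'' means and why the future-Cauchy case suffices is more detail than the paper provides, but it is consistent with its one-line argument.
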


The null distance was defined by Sormani and Vega \cite{SoVe} as a way to canonically metrize the topology on a spacetime with a time function (see also \cite{AlBu,SaSo}). The dependence on the choice of time function is undesired, and therefore the cosmological time function of Andersson, Galloway, and Howard is often cited as a canonical choice. We have thus shown that the cosmological volume function is a viable alternative. Another way to see this is the following: The null distance $\hat{d}_\tau$ is a conformal invariant. Thus, in order to uniquely recover the spacetime $(M,g)$ from $(M,\hat{d}_\tau,\tau)$, it is necessary that $\tau$ contains the information about the conformal factor. Since the conformal factor can be equivalently given by specifying the Lorentzian length functional $L_g$ \emph{or} the canonical volume element $\vol$, both the cosmological time function or the cosmological volume function are well suited for the task.

Finally, note that Andersson, Galloway, and Howard also show that their regular cosmological time functions are locally Lipschitz continuous, with well-defined first and second derivative at almost every point \cite[Thm.~1.2(v)]{AGH98}. It remains open if this is also true for regular cosmological volume functions. One case where it is likely to hold is when the spacetime obeys lower curvature bounds (i.e.\ energy conditions), since those imply bounds on the growth of volumes \cite{TrGr13} (see also proof of Theorem \ref{thm:sing}). {The differentiability of volume time functions in the traditional sense (i.e. for auxilary finite measures) has been investigated by Chru\'sciel, Grant, and Minguzzi, who also provided an example where what we here call the cosmological volume function is not $C^1$, but still Lipschitz \cite[Fig. 1]{CGM}.}

\section{Conclusions}

We have introduced the new notion of volume incompleteness as an alternative to geodesic incompleteness. The physical motivation is that if the future of a spacetime point has volume less than a Planck volume, then we expect quantum effects to become important for any prediction that an observer at that point would make, signaling a breakdown of General Relativity.

We have seen that the Schwarzschild and (non-maximally extended, sub-extremal) Kerr spacetimes are volume incomplete. More generally, volume singularities have the desirable feature that they are automatically hidden by a horizon. This leads us to conjecture that all singularities in realistic spacetimes should be volume singularities, as a new alternative way to formalize Penrose's cosmic censorship conjecture. In terms of future research, a very promising direction is laid down by Conjecture \ref{conj:Penvolsing}, that is, to prove a volume version of Penrose's singularity theorem which then automatically predicts the existence of an event horizon.

In the cosmological setting, we have argued that volume incompleteness should be seen as weaker than geodesic incompleteness, and have proven a basic Hawking-style singularity theorem. This opens the door to new singularity theorems in scenarios where proving geodesic incompleteness might not be possible. We have also defined a new type of time function for volume incomplete cosmological spacetimes, which is well-behaved in some situations where previous notions were not. Applications of the latter include the study of the null distance.

\bigskip
\noindent \textbf{Acknowledgements.} I am grateful to Greg Galloway, Melanie Graf, Eric Ling, Miguel S\'anchez, Elefterios Soultanis and Eric Woolgar for conversations at various stages of this work. Various parts of the paper were improved thanks to comments from anonymous reviewers. Part of this project was realized at, and supported by, the Fields Institute for Research in Mathematical Sciences, during the thematic program on Nonsmooth Riemannian and Lorentzian Geometry.

\newpage
\noindent \textbf{Declarations.} This version of the article has been accepted for publication, after peer review, but is not the Version of Record and does not reflect post-acceptance improvements, or any corrections. The Version of Record is available online at \\ \textcolor{blue}{\href{https://doi.org/10.1007/s11005-024-01814-y}{https://doi.org/10.1007/s11005-024-01814-y}}.

\printbibliography
\end{document}